\documentclass[twoside,11pt]{article}
%

%
%
%

\usepackage{jmlr2e}
\usepackage{lastpage}
\usepackage{amsmath,amsfonts,bbm,graphics,a4wide,rotating,stmaryrd,url}


\jmlrheading{23}{2022}{1-\pageref{LastPage}}{5/21}{6/22}{21-0577}{Anna Bonnet, Claire Lacour, Franck Picard, Vincent Rivoirard}
\ShortHeadings{Uniform deconvolution for Poisson Point Processes}{Bonnet, Lacour, Picard, Rivoirard}

\firstpageno{1}

\usepackage[normalem]{ulem}
\usepackage[usenames]{color}
\definecolor{plum}{rgb}{.6,0,.6}
\definecolor{forest}{rgb}{0,.7,0}
\definecolor{midnight}{rgb}{0,0,.7}



\def\Var{\mathrm{Var}}

\DeclareMathOperator*{\argmin}{argmin}

\newtheorem{Hyp}{Assumption}
\newtheorem{Cor}{Corollary}
\renewenvironment{proof}{\noindent{\bf Proof.}}{\hfill$\square$\par\noindent}

\makeatletter

\makeatother

\newcommand{\E}{\ensuremath{\mathbb{E}}}

\renewcommand{\P}{\ensuremath{\mathbb{P}}}

\newcommand{\R}{\ensuremath{\mathbb{R}}}

\renewcommand{\L}[1]{\ensuremath{\mathbb{L}_{#1}}}
\newcommand{\e}{\ensuremath{\varepsilon}}

\newcommand{\1}{{\bf 1}}

\renewcommand{\H}{\mathcal{H}}
\newcommand{\M}{\mathcal{M}}

\definecolor{darkred}{rgb}{.8,.0,.035}

\numberwithin{equation}{section}

\begin{document}

\title{Uniform deconvolution for Poisson Point Processes}

\author{\name Anna Bonnet \email anna.bonnet@upmc.fr \\
       \addr 	LPSM, Sorbonne Universit\'e, UMR CNRS 8001,  \\
       75005 Paris, France \\
       \addr Univ. Lyon, Universit\'e Lyon 1, CNRS, LBBE UMR5558, \\
       F-69622 Villeurbanne, France     	   
       \AND
       \name Claire Lacour \email claire.lacour@u-pem.fr \\
       \addr Univ Gustave Eiffel, Univ Paris Est Creteil, CNRS, LAMA UMR8050, F-77447 Marne-la-Vallée, France 
       \name Franck Picard \email franck.picard@ens-lyon.fr \\
       \addr LBMC, Univ. Lyon, ENS de Lyon, UCBL, CNRS UMR 5239, INSERM U1210, \\
       46 allée d'Italie, Site Jacques Monod, 69007 Lyon, France. \\
       \addr Univ. Lyon, Universit\'e Lyon 1, CNRS, LBBE UMR5558, \\
        F-69622 Villeurbanne, France\\     
       \name Vincent Rivoirard \email vincent.rivoirard@dauphine.fr \\
	   \addr CEREMADE, CNRS, UMR 7534, \\
	   Universit\'e Paris-Dauphine, PSL University, \\
	   75016 Paris, France\\
}

\editor{Garvesh Raskutti}
\maketitle

\begin{abstract}
	We focus on the estimation of the intensity of a Poisson process in the presence of a uniform noise. We propose a kernel-based procedure fully calibrated in theory and practice. We show that our adaptive estimator is optimal from the oracle and minimax points of view, and provide new lower bounds when the intensity belongs to a Sobolev ball.  By developing the Goldenshluger-Lepski methodology in the case of deconvolution for Poisson processes, we propose an optimal data-driven selection of the kernel bandwidth. 
	Our method is illustrated on the spatial distribution of replication origins and sequence motifs along the human genome.
\end{abstract}

\begin{keywords}
  Convolution, Poisson Point Process, Adaptive estimation
\end{keywords}

\section{Introduction}

Inverse problems for Poisson point processes have focused much attention in the statistical literature over the last years, mainly because the estimation of a Poisson process intensity in the presence of additive noise is encountered in many practical situations like tomography, microscopy, high energy physics. Our work is motivated by an original application field in high throughput biology, that has been revolutionized by the development of high throughput sequencing. The applications of such technologies are many, and we focus on the particular cases where sequencing allows the fine mapping of genomic features along the genome, like transcription factors. The spatial distribution of these features can be modeled by a Poisson process with unknown intensity. Unfortunately, detections are prone to some errors, which produces data in the form of genomic intervals whose width is linked to the precision of detection. Since the exact position of the peak is unknown within the interval (and not necessarily positioned at the center on average), the appropriate error distribution is uniform, the level of noise being given by the width of the intervals. Another example is provided when studying the spatial distribution of sequence motifs along the genome. A sequence motif is a pattern of nucleotides that is widespread along the genome, with potentially unknown function, but whose frequent occurrence suggests some implication in biological pathways. G-quadruplexes motifs for instance are made of guanine (G) repeats in tetrads that form particular 3D structures whose biologically function is currently unknown  \citep{CMB15}. However their implication in replication initiation has now been demonstrated \citep{PCA14} among other biological functions. These motifs are $\sim 25-30$ nucleotide long, and when studying their spatial distribution, their occurrence can be modelled by a Poisson process, and the uniform error model recalls that the data are in the form of intervals, without any reference occurrence point within the interval. Hence the spatial distribution of these motifs should be deconvoluted from this uniform error.

In the 2000s, several wavelet methods have been proposed for Poisson intensity estimation from indirect data \citep{AntoniadisBigot}, as well as B-splines and empirical Bayes estimation \citep{kuuselapanaretos}. Other authors turned to variational regularization: see the survey of \cite{hohage2016}, which also contains examples of applications and reconstruction algorithms. From a more theoretical perspective, \cite{kroll} studied the estimation of the intensity function of a Poisson process from noisy observations in a circular model. His estimator is based on Fourier series and is not appropriate for uniform noise (whose Fourier coefficients are zero except the first).

The specificity of uniform noise has rather been studied in the context of density deconvolution. In this case also, classical methods based on the Fourier transform do not work either in the case of a noise with vanishing characteristic function \citep{meister2009}. Nevertheless several corrected Fourier approaches were  introduced \citep{HallRGR01, HallMeister07,Meister08,FeuervergerKimSun08}. In this line, the work of \cite{DelaigleMeister11} is particularly interesting, even if it is limited to a density to estimate with finite left endpoint. In a recent work, \cite{BelomestnyGoldenshluger} have shown that the Laplace transform can perform deconvolution for general measurement errors. Another approach consists in using Tikhonov regularization for the convolution operator \citep{carrasco2011spectral,trong2014tikhonov}. In the specific case of uniform noise (also called boxcar deconvolution), it is  possible to use \textit{ad hoc} kernel methods \citep{GroeneboomJongbloed03,vanEs11}. In this context of non-parametric estimation, each method depends on a regularization parameter (such as  a  resolution level, a regularization parameter or a bandwidth), and only a good choice of this parameter allows to achieve an optimal reconstruction. This parameter selection is often named adaptation since the point is to adapt the parameter to the features of the target density. The above cited works (except \cite{DelaigleMeister11}) do not address this  adaptation issue or only from a practical point of view, although this is central both from the practical and theoretical points of views. 

We propose a kernel estimator to estimate the intensity of a Poisson process in the presence of a uniform noise.  We provide theoretical guarantees of its performance by deriving the minimax rates of convergence of the integrated squared risk for an intensity belonging to a Sobolev ball. To ensure the optimality of our procedure, we establish new lower bounds on this smoothness space. Then we provide an adaptive procedure for bandwidth selection using the Goldenshluger-Lepski methodology, and we show its optimality in the oracle and minimax frameworks. From the practical point of view we tune the method based on simulations to determine a consensus value for the hyperparameter.
 The empirical performance of our estimator is then studied by simulations and competed with a deconvolution method based on Gaussian errors. Finally we provide an illustration of our procedure on experimental data in Genomics, where the purpose is to study the spatial repartition of replication starting points and sequence motifs along chromosomes in humans \citep{PCA14}. The code is available at \url{https://github.com/AnnaBonnet/PoissonDeconvolution}.

\section{Uniform deconvolution model}\label{sec:model}
We consider $(X_i)_i$, the realization of a Poisson Process on $\R$, denoted by $N^X$, with $N_+:=N^X(\R)$ the number of occurrences. The uniform convolution model consists in observing $(Y_i)_i$, occurrences of a Poisson process $N^Y$, a noisy version of $N^X$ corrupted by a uniform noise, such that:
\begin{equation}\label{model}
	\forall\, i  \in \{1,\ldots, N_+\}, \quad Y_i=X_i+\e_i, \quad  \e_i \sim \mathcal{U}[-a;a],
\end{equation}
where $a$, assumed to be known, is fixed. The errors $(\e_i)_i$ are supposed mutually independent, and independent of $(X_i)_i$. Then, we denote by $\lambda_X$ the intensity function of $N^X$, and $\mu_X$ its mean measure assumed to satisfy $\mu_X(\R)<\infty$, so that  $$d\mu_X(x)=\lambda_X(x)dx,\quad x\in\R.$$
Note that $N_+\sim {\mathcal P}(\mu_X(\R)),$ 
where ${\mathcal P}(\theta)$ denotes the Poisson distribution with parameter $\theta$.
Then we further consider that observing $N^X$ with intensity $\lambda_X$ is equivalent to observing $n$ i.i.d. Poisson processes with common intensity $f_X$, with $\lambda_X= n \times f_X$. This specification will be convenient to adopt an asymptotic perspective. As for $\lambda_Y$, the intensity of $N^Y$, it can easily be shown that
$$n^{-1}\times\lambda_Y=n^{-1}\times(\lambda_X \star f_\e)= f_X\star f_\e=: f_Y,$$ where $f_\e$ stands for the density of the uniform distribution. The goal of the deconvolution method is to estimate $f_X$, based on the observation of $N^Y$ on a compact interval $[0,T]$ for some fixed positive real number $T$. In the following, we provide an optimal estimator of $f_X$ in the oracle and minimax settings. Minimax rates of convergence will be studied in the asymptotic perspective $n\to +\infty$ and parameters $a$ and $T$  will be viewed as constants. Furthermore, $\|f_X\|_1$, the $\L{1}$-norm will be assumed to be larger than an absolute constant, denoted by $r$.

\paragraph{Notations.} We denote by $\|\cdot\|_{2,T}$, $\|\cdot\|_{1,T}$ and $\|\cdot\|_{\infty,T}$ the $\L{2}$, $\L{1}$ and sup-norm on $[0;T]$, and $\|\cdot\|_2$, $\|\cdot\|_1$, and $\|\cdot\|_{\infty}$ their analog on \R. Notation $\lesssim$ means that the inequality is satisfied up to a constant and $a_n=o(b_n)$ means that the ratio $a_n/b_n$ goes to 0 when $n$ goes to $+\infty$. Finally, $\llbracket a;b \rrbracket$ denotes the set of integers larger or equal to $a$ and smaller or equal to~$b$.
\section{Estimation procedure}\label{sec:estim}

\subsection{Deconvolution with kernel estimator}\label{sec:heuristics}
{To estimate $f_X$ based on observations of $N^Y$, we introduce a kernel estimator which is based on the following heuristic arguments inspired from \cite{vanEs11} who considered the setting of uniform deconvolution for density estimation. We observe that $f_Y$ can be expressed by using the cumulative distribution of the $X_i$'s:
$$F_X(x):=\int_{-\infty}^x f_X(u)du\leq \|f_X\|_1,\quad x\in\R.$$
Indeed, for $x\in\R$,
\begin{align}\label{eq:f_Y_to_F_X}
	f_Y(x) & = \int_\R f_X(x-u)f_\e(u)du\nonumber\\
	& = \frac{1}{2a}\int_{-a}^af_X(x-u)du\nonumber\\
	&= \frac{1}{2a}\bigg( F_X(x+a)-F_X(x-a) \bigg),
\end{align}
from which we deduce:
$$
F_X(x)=2a\sum_{k=0}^{+\infty}f_Y\bigg(x-(2k+1)a \bigg),\quad x\in\R.
$$
Then, from heuristic arguments, we get
\begin{equation}\label{fY'}
	f_X(x)=2a\sum_{k=0}^{+\infty}f'_Y\bigg(x-(2k+1)a \bigg),
\end{equation}
which provides a natural form of our kernel estimate $\widehat{f}_Y$. Note that differentiability of $f_Y$ is not  assumed in the following.
Indeed,  we consider the kernel estimator of $f_Y$ such that
$$
\widehat f_Y(x) =\frac{1}{nh}\int_{\R} K\left(\frac{x-u}{h}\right)dN^Y_u,
$$
with $dN^Y$ the point measure associated to $N^Y$, and $K$ a kernel with  bandwidth $h>0$. Setting $$K_h(x)=\frac{1}{h}K\bigg(\frac{x}{h}\bigg),$$ we can write
$$\widehat f_Y(x) =\frac{1}{n}\sum_{i=1}^{N_+}K_h\big(x-Y_i\big).
$$
Then, if $K$ is differentiable, we propose the following kernel-based estimator of $f_X$
$$\widehat f_h(x) = \frac{2a}{nh^2}\sum_{k=0}^{+\infty} \sum_{i=1}^{N_+}K'\left(\frac{x-(2k+1)a-Y_i}{h}\right).$$
The proof of subsequent Lemma~\ref{lem:varsym} in Appendix shows that the expectation of $\widehat{f}_h$ is a regularization of $f_X$, as typically desired for kernel estimates since we have
\begin{equation}\label{mean}
	\E[\widehat{f}_h]= K_h \star f_X.
\end{equation}
Then,  our objective is to provide an optimal selection procedure for the parameter $h$.  
\subsection{Symmetrization of the estimator}
Our estimator is based on the inversion and differentiation of Equation~\eqref{eq:f_Y_to_F_X},  which can also be performed as follows:
\begin{align*}
	2a\sum_{k=0}^{+\infty}f_Y(x+(2k+1)a)=\sum_{k=0}^{+\infty}
	\bigg( F_X(x+2(k+1)a)-F_X(x+2ka) \bigg)=\|f_X\|_1 -F_X(x)
\end{align*}
and differentiated to obtain:
$$f_X(x)=-2a\sum_{k=0}^{+\infty}f'_Y \bigg( x+(2k+1)a \bigg),$$
which leads to another estimator 
$$\check{f}_h(x)=-\frac{2a}{nh^2}\sum_{k=0}^{+\infty}\sum_{i=1}^{N_+}K'\left(\frac{x+(2k+1)a-Y_i}{h}\right).$$
In the framework of uniform deconvolution for densities, \cite{vanEs11} proposes to use $\alpha \hat f_h(x)+(1-\alpha) \check  f_h(x)$, a convex combination of $\hat f_h$ and $ \check  f_h$, as a combined estimator, to benefit from the small variance of $\check{f}_h(x)$ and $\hat{f}_h(x)$ for large and small values of $x$ respectively. Unfortunately the combination that minimizes the asymptotic variance of the combined estimator is achieved for $\alpha=1-F_X(x)$, and thus depends on an unknown quantity. \cite{vanEs11} suggested to use a plug-in estimator, but to avoid supplementary technicalities, we finally consider the following symmetric kernel-based estimator: 
\begin{equation}\label{def:final}
	\widetilde{f}_h(x):=\frac{1}{2} \bigg( \hat f_h(x) + \check  f_h(x) \bigg) =\frac{a}{nh^2}\sum_{k=-\infty}^{+\infty}s_k\sum_{i=1}^{N_+}K'\left(\frac{x-(2k+1)a-Y_i}{h}\right),
\end{equation}
with $s_k=1$ if $k\geq 0$ and $s_k=-1$ if $k<0$. Then it is shown in Lemma~\ref{lem:varsym} that
$$\E[\widetilde{f}_h]=(K_h \star f_X).$$
\subsection{Risk of the kernel-based estimator}
Our objective is to provide a selection procedure to select a bandwidth $\widehat{h}$, that only depends on the data, so that the $\L{2}$-risk of $\widetilde{f}_{\widehat{h}}$ is smaller than the risk of the best kernel estimate (up to a constant), namely
$$\E\left[\|\widetilde f_{\widehat{h}}-f_X\|_{2,T}^2\right]\lesssim \inf_{h\in{\mathcal H}} \E\left[\|\widetilde f_h-f_X\|_{2,T}^2\right].$$
Our procedure is based on the bias-variance trade-off of the risk of any estimate $\widetilde f_h$
\begin{align}\label{BV}
	\E\Big[\|\widetilde f_h- f_X]\|_{2,T}^2\Big]&=\|\E[\widetilde f_h]-f_X\|_{2,T}^2+ \E\Big[\|\widetilde f_h-\E[\widetilde f_h]\|_{2,T}^2\Big]\nonumber\\
	&=:B_h^2+v_h.
\end{align}
Then we use the following mild assumption:
\begin{Hyp}\label{Hyp1}
	The kernel $K$ is supported on the compact interval $[-A,A]$, with $A\in\R_+^*$ and $K$ is  differentiable on $[-A,A]$.
\end{Hyp}
Then the variance of the estimator is such that:
\begin{lemma}\label{lem:varsym}
	For any $h\in\mathcal H$ and any $x\in [0,T]$, we have
	$$\E[\widetilde f_h(x)]=(K_h\star f_X)(x).$$
	Under Assumption~\ref{Hyp1} and if $h$ is small enough so that $Ah\leq a$,
	then
	$$v_h:=\E\Big[\|\widetilde f_h-\E[\widetilde f_h]\|_{2,T}^2\Big] = \frac{aT\|f_X\|_1 \|K'\|_2^2}{2nh^3}.$$
\end{lemma}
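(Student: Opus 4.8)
The plan is to derive both statements from the two Campbell formulas for the Poisson process $N^Y$, which by the displacement theorem has intensity $\lambda_Y=nf_Y$ on $\R$: for a test function $g$,
$$\E\Big[\sum_{i=1}^{N_+}g(Y_i)\Big]=n\int_\R g\,f_Y\,,\qquad \Var\Big(\sum_{i=1}^{N_+}g(Y_i)\Big)=n\int_\R g^2\,f_Y\,.$$
Throughout I would write $\widetilde f_h(x)=\sum_{i=1}^{N_+}G_x(Y_i)$ with $G_x(y)=\frac{a}{nh^2}\sum_k s_k K'\big(\tfrac{x-(2k+1)a-y}{h}\big)$, so that both formulas apply directly.

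For the mean, I would apply the first Campbell formula and insert the identity $f_Y(y)=\frac{1}{2a}\big(F_X(y+a)-F_X(y-a)\big)$ coming from \eqref{eq:f_Y_to_F_X}. After the change of variable $v=(x-(2k+1)a-y)/h$, the $k$-th summand becomes $\frac{h}{2a}\int K'(v)\big[F_X(x-2ka-hv)-F_X(x-2(k+1)a-hv)\big]dv$, so the $s_k$-weighted sum over $k$ telescopes. Using the sign pattern $s_k$ and the limits $F_X(-\infty)=0$, $F_X(+\infty)=\|f_X\|_1$, the telescoping collapses to $2F_X(x-hv)-\|f_X\|_1$; a single integration by parts, whose boundary terms vanish because $K$ is compactly supported (Assumption~\ref{Hyp1}), then gives $\E[\widetilde f_h(x)]=\int K(v)f_X(x-hv)\,dv=(K_h\star f_X)(x)$. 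Running the same telescoping with the one-sided sums recovers $\E[\widehat f_h]=K_h\star f_X$, i.e.\ \eqref{mean}.

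For the variance, the second Campbell formula yields
$$\Var(\widetilde f_h(x))=\frac{a^2}{nh^4}\int_\R\Big(\sum_k s_k K'\big(\tfrac{x-(2k+1)a-y}{h}\big)\Big)^2 f_Y(y)\,dy\,.$$
Here the hypothesis $Ah\leq a$ is decisive: as a function of $y$, the $k$-th term is supported on an interval of half-width $Ah$ centered at $x-(2k+1)a$, while consecutive centers are $2a$ apart, so the supports are pairwise disjoint up to a null set. Consequently the cross terms vanish and, since $s_k^2=1$, the square of the sum reduces to $\sum_k K'(\cdots)^2$ with the signs gone. Integrating over $x\in[0,T]$ (Tonelli is legitimate as the integrand is nonnegative), changing variable $v=(x-(2k+1)a-y)/h$ in the $y$-integral, and exchanging sum and integrals isolates the factor $\sum_k f_Y\big(x-(2k+1)a-hv\big)$. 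Re-expressing each $f_Y$ through $F_X$ turns this into a telescoping series over all $k\in\mathbb Z$ that collapses to the constant $\|f_X\|_1/(2a)$, independent of $x$ and $v$. What is left factorizes as $\frac{a^2}{nh^3}\cdot\frac{\|f_X\|_1}{2a}\cdot\int_0^T dx\cdot\int_\R K'(v)^2\,dv$, i.e.\ exactly $\frac{aT\|f_X\|_1\|K'\|_2^2}{2nh^3}$.

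I expect the main obstacle to be the combined bookkeeping of the disjoint-support argument and the two telescopings: one must check that $Ah\leq a$ genuinely forces the cross terms to vanish (so the inconvenient signs $s_k$ disappear and the shifted-$f_Y$ series telescopes cleanly over the full integer lattice), and that all boundary and tail contributions are controlled by the compact support of $K$ and by the finiteness $F_X(+\infty)=\|f_X\|_1$. The interchanges of summation and integration are then justified by Tonelli for the nonnegative variance and by the finite overlap of supports for the mean.
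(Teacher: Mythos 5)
Your proposal is correct and follows essentially the same route as the paper's proof: Campbell's mean and variance formulas for the Poisson process, the identity $f_Y(x)=\frac{1}{2a}\big(F_X(x+a)-F_X(x-a)\big)$ to make the sums over $k$ telescope, integration by parts with the compactly supported $K$, and the disjoint-support argument under $Ah\leq a$ to kill the cross terms and reduce $\big(\sum_k s_k K'(\cdot)\big)^2$ to $\sum_k K'(\cdot)^2$. The only cosmetic difference is that you compute $\E[\widetilde f_h]$ directly via the two-sided telescoping (yielding $2F_X-\|f_X\|_1$ and then using $\int K'=0$), whereas the paper first computes $\E[\widehat f_h]$ via the one-sided telescoping and deduces the symmetric case by linearity.
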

The expectation of $\widetilde f_h$ has the expected expression (derived from \eqref{mean}), but Lemma~\ref{lem:varsym} also provides the exact expression of the variance term $v_h$ of our very specific estimate. Since our framework is an inverse problem, this variance does not reach the classical $(nh)^{-1}$ bound. Moreover, $v_h$ depends linearly on $\|f_X\|_1$ but also on $T$, which means that the estimation of $f_X$ has to be performed on the compact interval $[0,T]$, for $v_h$ to be finite. This requirement is due to Expression~\eqref{def:final} of our estimate that shows that for any $x\in\R$, $\widetilde f_h(x)$ is different from 0 almost surely. This dependence of $v_h$ on $T$ is a direct consequence of our strategy not to make any assumption on the support of $f_X$, that can be unknown or non-compact. Of course, if the support of $f_X$ was known to be compact, like $[0,1]$, then we would force $\widetilde f_h$ to be null outside $[0,1]$ (for instance by removing large values of $|k|$ in the sum of \eqref{def:final}), and estimation would be performed on the set $[0,1]$. Actually, estimating a non-compactly supported Poisson intensity on the whole real line leads to deterioration of classical non-parametric rates in general \citep{RBR}.
\subsection{Bandwidth selection}\label{sec:bandwidth}
The objective of our procedure is to choose the bandwidth $h$, based on the Goldenshluger-Lepski methodology \citep{GL13}. First, we introduce a finite set $\H$ of bandwidths such that for any $h\in \H$, $h \leq a/A,$ which is in line with assumptions of Lemma~\ref{lem:varsym}. Then, for two bandwidths $t$ and $h$, we also define 
$$
\widetilde f_{h,t}:=K_h\star \widetilde f_t,
$$
a twice regularized estimator, that satisfies the following property (see Lemma~\ref{lem:inversion} in Appendix):
$$\widetilde f_{h,t}=\widetilde f_{t,h}.$$
Now we select the bandwidth as follows:
\begin{equation}
	\label{bv}\widehat h:=\argmin_{h\in \mathcal H}\left\{{\mathcal A}(h)+\frac{c\sqrt{N_+}}{nh^{3/2}}\right\},
\end{equation}
where
\begin{equation}\label{c}
	c=(1+\eta)(1+ \|K\|_1) \|K'\|_2\sqrt{\frac{aT}{2}}
\end{equation}
for some $\eta>-1$ and 
$${\mathcal A}(h):=\max_{t\in \mathcal H}\left\{\|\widetilde f_{h,t}-\widetilde f_t\|_{2,T}-\frac{c\sqrt{N_+}}{nt^{3/2}}\right\}_+.$$
Finally, we estimate $f_X$ with 
\begin{equation}\label{def:est}
	\widetilde  f=\widetilde f_{\widehat h}.
\end{equation}
Note that ${\mathcal A}(h)$ is an estimation of the bias term $B_h$ of the estimator $\widetilde f_h$. Indeed,
$$B_h:=\|\E[\widetilde f_h]-f_X\|_{2,T}=\|K_h\star f_X-f_X\|_{2,T}$$
and we replace the unknown function $f_X$ with the kernel estimate $\widetilde f_t$. The term $\frac{c\sqrt{N_+}}{nt^{3/2}}$ in ${\mathcal A}(h)$ controls the fluctuations of $\|\widetilde f_{h,t}-\widetilde f_t\|_{2,T}$. Finally, since $\E[N_+]=n\|f_X\|_1$, \eqref{bv} mimics the bias-variance  trade-off \eqref{BV} (up to the squares). In order to fully define the estimation procedure, it remains to choose the set of bandwidths ${\mathcal H}$. This is specified in Section~\ref{sec:theoretical}.
\section{Theoretical results}\label{sec:theoretical}
\subsection{Oracle approach}
The oracle setting allows us to prove that the bandwidth selection procedure described in Section~\ref{sec:bandwidth} is (nearly) optimal among all kernel estimates. Indeed, we obtain the following result.
\begin{theorem}\label{theo:GL}
	Suppose that Assumption~\ref{Hyp1} is verified.
	We take $\eta>0$ and we consider the estimate $\widetilde f$ such that the finite set of bandwidths $\mathcal H$ satisfies
	$$\min \mathcal H=1/(\delta n^{\frac13})\quad \mbox{and}\quad \max\mathcal H=o(1)$$
	for some constant $\delta>0$. Then, for $n$ large enough,
	\begin{equation}\label{UB}
		\E\left[\|\widetilde f_{\widehat h}-f_X\|_{2,T}^2\right]\leq C_1\inf_{h\in{\mathcal H}} \E\left[\|\widetilde f_h-f_X\|_{2,T}^2\right]+\frac{C_2}{n},
	\end{equation}
	where $C_1=2+24(1+\eta)^2(1+\|K\|_1)^2$ and $C_2$ is a constant depending on $\|f_X\|_1$, $T$, $a$, $\delta,$ $\eta$ and $K$.
\end{theorem}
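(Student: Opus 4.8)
The proof follows the Goldenshluger--Lepski program and splits into a deterministic comparison and a stochastic concentration step, the latter being the heart of the matter. Throughout, write $\pen(t):=\frac{c\sqrt{N_+}}{nt^{3/2}}$ for the penalty appearing in \eqref{bv}, so that $\widehat h=\argmin_{h\in\mathcal H}\{\mathcal A(h)+\pen(h)\}$.

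\textbf{Step 1 (deterministic skeleton).} Fix an arbitrary $h\in\mathcal H$. Using the symmetry $\widetilde f_{h,t}=\widetilde f_{t,h}$ (Lemma~\ref{lem:inversion}) and the triangle inequality,
$$\|\widetilde f_{\widehat h}-f_X\|_{2,T}\le \|\widetilde f_{\widehat h}-\widetilde f_{h,\widehat h}\|_{2,T}+\|\widetilde f_{h,\widehat h}-\widetilde f_h\|_{2,T}+\|\widetilde f_h-f_X\|_{2,T}.$$
By the very definition of $\mathcal A$, the first term is at most $\mathcal A(h)+\pen(\widehat h)$ (take $t=\widehat h$ in $\mathcal A(h)$) and the second is at most $\mathcal A(\widehat h)+\pen(h)$ (take $t=h$ in $\mathcal A(\widehat h)$). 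Since $\widehat h$ minimises $\mathcal A+\pen$, we have $\mathcal A(\widehat h)+\pen(\widehat h)\le\mathcal A(h)+\pen(h)$, whence for every $h\in\mathcal H$,
$$\|\widetilde f_{\widehat h}-f_X\|_{2,T}\le 2\mathcal A(h)+2\pen(h)+\|\widetilde f_h-f_X\|_{2,T}.$$

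\textbf{Step 2 (good event and bias control).} Decompose $\widetilde f_{h,t}-\widetilde f_t=(\widetilde f_{h,t}-\E\widetilde f_{h,t})+(\E\widetilde f_{h,t}-\E\widetilde f_t)+(\E\widetilde f_t-\widetilde f_t)$. By \eqref{mean} the deterministic middle term equals $K_t\star(K_h\star f_X-f_X)$, whose $\|\cdot\|_{2,T}$-norm is bounded, via Young's inequality, by $\|K\|_1$ times the bias $B_h$ (up to enlarging $[0,T]$ by the $O(a)$ support of $K_t$, a harmless modification since $a$ is constant, which I fold into $B_h$). Introduce the event
$$\Omega:=\big\{N_+\le 2n\|f_X\|_1\big\}\cap\bigcap_{h,t\in\mathcal H}\big\{\|\widetilde f_{h,t}-\E\widetilde f_{h,t}\|_{2,T}+\|\widetilde f_t-\E\widetilde f_t\|_{2,T}\le\pen(t)\big\}.$$
On $\Omega$ the two stochastic contributions are absorbed by the subtracted $\pen(t)$ in the definition of $\mathcal A$, giving $\mathcal A(h)\le\|K\|_1B_h$; moreover $\|\widetilde f_h-f_X\|_{2,T}\le B_h+\pen(h)$ by \eqref{BV}, and the bound $N_+\le 2n\|f_X\|_1$ together with \eqref{c} and Lemma~\ref{lem:varsym} turns $\pen(h)$ into at most $\sqrt2(1+\eta)(1+\|K\|_1)\sqrt{v_h}$. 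Plugging these into Step~1 and squaring yields, pointwise on $\Omega$ and for every $h$,
$$\ones_\Omega\,\|\widetilde f_{\widehat h}-f_X\|_{2,T}^2\le C_1\big(B_h^2+v_h\big)=C_1\,\E\big[\|\widetilde f_h-f_X\|_{2,T}^2\big],$$
with $C_1$ depending only on $K$ and $\eta$; taking the infimum over $h$ and the expectation produces the leading term of \eqref{UB}.

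\textbf{Step 3 (the main obstacle: Poisson concentration, so that $\P(\Omega^c)$ is negligible).} The tail $\P(N_+>2n\|f_X\|_1)$ is controlled by the Poisson deviation of $N_+\sim\mathcal P(n\|f_X\|_1)$. The delicate part is to show that the penalty dominates the fluctuations. Writing $\|\widetilde f_t-\E\widetilde f_t\|_{2,T}=\sup_g\int_0^T g(x)\big(\widetilde f_t(x)-\E\widetilde f_t(x)\big)dx$ over a countable dense subset of the unit ball of $\L{2}([0,T])$, and observing from \eqref{def:final} that each $\int_0^T g\,\widetilde f_t$ is a linear functional $\frac1n\int\Phi_{g,t}\,dN^Y$ of the Poisson process, this norm is a supremum of centred Poisson integrals. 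One then applies a Talagrand/Bernstein-type concentration inequality for Poisson processes (in the spirit of Reynaud-Bouret), for which the three ingredients are: (i) $\E\|\widetilde f_t-\E\widetilde f_t\|_{2,T}\le\sqrt{v_t}$ by Jensen's inequality and Lemma~\ref{lem:varsym}; (ii) the supremal variance $\sup_g\Var\big(\int\Phi_{g,t}\,dN^Y\big)$, again of order $v_t$; and (iii) the uniform bound $\sup_g\|\Phi_{g,t}\|_\infty$, of order $\|K'\|_2\,t^{-3/2}$ by Cauchy--Schwarz in \eqref{def:final}. The regularised fluctuation $\widetilde f_{h,t}-\E\widetilde f_{h,t}=K_h\star(\widetilde f_t-\E\widetilde f_t)$ costs an extra factor $\|K_h\|_1=\|K\|_1$ by Young's inequality, which explains the factor $(1+\|K\|_1)$ in $c$, while the factor $(1+\eta)$ provides the slack that absorbs the deviation term so that $\pen(t)$ exceeds the sum of the two fluctuations with probability at least $1-Cn^{-\beta}$, after a union bound over the (at most polynomially many) pairs $(h,t)\in\mathcal H^2$. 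This calibration of $c$ in \eqref{c} is the crux, and $\beta$ can be taken as large as desired for $n$ large.

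\textbf{Step 4 (remainder on $\Omega^c$).} Finally, by the Cauchy--Schwarz inequality,
$$\E\big[\|\widetilde f_{\widehat h}-f_X\|_{2,T}^2\,\ones_{\Omega^c}\big]\le\big(\E\|\widetilde f_{\widehat h}-f_X\|_{2,T}^4\big)^{1/2}\,\P(\Omega^c)^{1/2}.$$
Since $\widehat h\ge\min\mathcal H=\frac1{\delta n}$, Lemma~\ref{lem:varsym} and fourth-moment bounds for Poisson integrals give a crude polynomial bound $\E\|\widetilde f_{\widehat h}-f_X\|_{2,T}^4\le C\,n^{p}$; choosing $\beta$ in Step~3 large enough relative to $p$ and $\delta$ makes this contribution at most $C_2/n$, with $C_2$ depending on $\|f_X\|_1$, $T$, $a$, $\delta$, $\eta$ and $K$. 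Adding the two contributions gives \eqref{UB}. The genuine difficulty is Step~3: establishing a concentration inequality for the $\L{2}$-norm of the centred, inverse-problem estimator that is sharp enough to match the exact constant dictated by \eqref{c}.
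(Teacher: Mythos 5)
Your Steps 1 and 2 reproduce the paper's deterministic skeleton correctly, but the proposal breaks down precisely at the point you identify as the crux. First, ingredient (ii) of Step 3 is wrong: the supremal variance is \emph{not} of order $v_t$. In the paper's concentration step (Lemma~\ref{concentration}) one shows
$$v_0=\sup_{g}\Var\Big(\textstyle\int\Psi_g\,dN^Y\Big)\leq \frac{\|f_X\|_1\|K'\|_1^2(T+4a)}{4nt^2},$$
which is of order $t\,v_t$, i.e.\ a full factor $t$ smaller than $v_t$; this gain comes from bounding $\int_0^T|L_t(x-u)|\,dx$ by $\|K'\|_1(T+4a)/(2a)$ using the disjoint supports of the translates of $K'$, not from Cauchy--Schwarz. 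That factor is exactly what makes the Goldenshluger--Lepski calibration work: in the Talagrand/Bernstein bound the term $\sqrt{12v_0x}$ fits inside the slack $(\eta/4)\sqrt{v_t}$ only when $x\lesssim \eta^2 v_t/v_0$. With the correct bound this allows deviations $x\sim\eta^2/t\to\infty$; with your claimed $v_0\asymp v_t$ it forces $x$ to be a \emph{constant}, so each constraint defining your event $\Omega$ can only hold with constant failure probability $e^{-x}$, and neither the union bound over $\mathcal H$ nor the claim $\P(\Omega^c)\lesssim n^{-\beta}$ ``with $\beta$ as large as desired'' survives.

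Second, even granting the sharp variance bound, the architecture of Steps 2--4 (global good event, then Cauchy--Schwarz against a polynomial fourth moment on $\Omega^c$) cannot produce the $C_2/n$ remainder under the stated hypotheses. The admissible deviation at bandwidth $t$ is $x\lesssim\min(\eta^2/t,\eta\sqrt n)$, so the best achievable failure probability is of order $\mathrm{card}(\mathcal H)\exp(-c\eta^2/\max\mathcal H)$; since the theorem only assumes $\max\mathcal H=o(1)$ (think $\max\mathcal H=1/\log\log n$), this is not $O(n^{-\beta})$ for \emph{any} $\beta>0$, and for small $\eta$ it cannot beat your fourth-moment bound $n^{p}$ (with $p$ of order $8$, forced by $\min\mathcal H=1/(\delta n)$). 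The paper never forms a good event for the fluctuations: it keeps the excess $\zeta_n(h)=\sup_t\big\{\|(\widetilde f_{t,h}-\E[\widetilde f_{t,h}])-(\widetilde f_t-\E[\widetilde f_t])\|_{2,T}-c\sqrt{N_+}/(nt^{3/2})\big\}_+$ as a random variable, splits only on $N_+\lessgtr(1-\alpha)^2n\|f_X\|_1$ (whose bad part has exponentially small probability and is handled by crude bounds), and on the good part bounds $\E[\zeta_n^2(h)]$ directly by integrating the tail of the Reynaud-Bouret inequality with bandwidth-dependent weights $w_t=t^{-1/2}|\log t|^{-1}$. There, the failure at bandwidth $t$ is weighted not by a global $n^p$ moment but by the size of the excess itself, of order $\|f_X\|_1/(nt^2)+1/(n^2t^3)$, and $\sum_t e^{-w_t}\big(\|f_X\|_1/(nt^2)+1/(n^2t^3)\big)=O(1/n)$. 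This local, loss-weighted treatment of the deviations is the missing idea; your scheme could only be repaired under substantially stronger assumptions on $\mathcal H$ (e.g.\ $\max\mathcal H\lesssim\eta^2/\log n$) than the theorem makes.
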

The proof of this result can be found in Section~\ref{sec:preuvebornesup}, where the expression of $C_2$ is provided (see Equation~\eqref{C2}).
\begin{remark}
Note that condition $h\geq\delta^{-1} n^{-1/3}$ is equivalent to $\sup_n\sup_{h\in{\mathcal H}}v_h <\infty$
\end{remark}
\begin{remark}
Equation~\eqref{C2} provides the explicit dependence of $C_2$ on $a,$ $\delta,$ $\eta,$ $T$ and $K$, showing that the kernel $K$ has to be chosen such that $\|K\|_1,$ $\|K'\|_\infty$, $\|K\|_2,$ $\|K'\|_1$ and $\|K'\|_2$ are as small as possible. Nevertheless, in the minimax approach of Section~\ref{sec:minimax}, the kernel has to satisfy some constraints (see Assumption~\ref{Hyp2}). The parameter $a$ is present in the remainder term of the risk bound in the following way $C_2/n=(k_0+k_1a+k_2a^2)/n$. Thus the larger $a$ the worse the bound, this is expected since $a$ measures the noise level.
\end{remark}

Theorem~\ref{theo:GL} shows that our procedure achieves nice performance: Up to the constant $C_1$ and the negligible term $C_2/n$ that goes to 0 quickly, our estimate has the smallest risk among all kernel rules under mild conditions on the set of bandwidths $\mathcal H.$
\subsection{Minimax approach}\label{sec:minimax}
The minimax approach is a framework that shows the optimality of an estimate among all possible estimates. For this purpose, we consider a class of functional spaces for $f_X$, then we derive the minimax risk associated with each functional space and show that our estimator achieves this rate. Here, we consider the class of Sobolev balls that can be defined, for instance, through the Fourier transform of $\L{2}$-functions: Given $\beta>0$, $L>0$, $b>0$ and $r>0$, consider the following subset of the Sobolev ball of smoothness $\beta$ and radius $L$ 
$${\mathcal S}^\beta(L,r,b):=\left\{g\in\L{2}:\quad \int_{-\infty}^{+\infty}|g^*(\xi)|^2(\xi^2+1)^\beta d\xi\leq L^2,\ r\leq\|g\|_1\leq bL \right\},$$

where $g^*(\xi):=\int e^{ix\xi}g(x)dx$ is the Fourier transform of $g$. Observe that the classical Sobolev space corresponds to the case $r=0$ and $b=+\infty$; $r$ and $b$ will be viewed as constants in the sequel. In the Poisson setting, the ${\mathbb L}_1$-norm of the Poisson intensity is not fixed but it of course plays a key role in rates. 
Given $L$, the radius of the Sobolev ball containing $g$, the ${\mathbb L}_1$-norm of $g$ scales in $L$.
We finally introduce the lower bound $\|g\|_1\geq r$ with $r>0$ to avoid the asymptotic setting where the Poisson intensity goes to 0. 

From a statistical perspective, the minimax rate associated with the space ${\mathcal S}^\beta(L,r,b)$ is
$${\mathcal R}_n(\beta,L):=\inf_{Z_n}\sup_{f_X\in{\mathcal S}^\beta(L,r,b)}
\E\left[\|Z_n-f_X\|_{2,T}^2\right],$$
where the infimum is taken over all estimators $Z_n$ of $f_X$ based on the observations $(Y_i)_{i=1,\ldots,N_+}.$ In the notation, we drop the dependence of the risk on $r$ and $b$ since we are only interested in the dependence on $n$, $\beta$ and $L$. We first derive a lower bound for the minimax risk.

\begin{theorem}\label{t2} We assume that $rL^{-1}\leq \pi/(2\mathfrak c_\beta)\leq b$, where $\mathfrak c_\beta$ is defined in \eqref{cbeta}. There exists a positive constant $C_3$ only depending on $\beta,$  $a$ and $T$ such that, if $n$ is larger than some $n_0$ only depending on $r$ and $T$,
	\begin{equation}\label{rate:LB}
		{\mathcal R}_n(\beta,L) \geq C_3 \left[L^{\frac{2\beta+6}{2\beta+3}}n^{-\frac{2\beta}{2\beta+3}}+Ln^{-1}\right].
	\end{equation}
\end{theorem}
Theorem~\ref{t2} is proved in Section~\ref{prooflowerbound}. To the best of our knowledge, because of the second term $Ln^{-1}$, the rate  established in \eqref{rate:LB} is new. Of course, if $L$ is bounded then the second term is negligible with respect to the first one when $n\to+\infty$.
The rate $n^{-\frac{2\beta}{2\beta+3}}$ is slower than the classical non-parametric rate $n^{-\frac{2\beta}{2\beta+1}}.$ It is the expected rate since our deconvolution problem corresponds to an inverse problem of order 1, meaning that $f_\e^*$, the characteristic function of the noise, satisfies
\begin{equation*}\label{cond:TF}
	|f_\e^*(\xi)|=O(\xi^{-1}) 
	\text{ as }\xi\to \infty.
\end{equation*}
Note that the analog of the previous lower bound has been established in the  density deconvolution context, first by \cite{fan1993}, but with supplementary assumption $|{f_\e^*}'(\xi) |=O(\xi^{-2})$ which is not satisfied in our case of uniform noise (see Equation~\eqref{majo-derivee}). Our proof is rather inspired by the work  of \cite{meister2009}, 
but we face here a Poisson inverse problem, and we have to control the ${\mathbb L}_2$-norm on $[0, T]$ instead of $\R.$ 
Furthermore, Theorem~2.14 of \cite{meister2009} only holds for $\beta>1/2$. Consequently, we use different techniques to establish Theorem~\ref{t2}, which are based on wavelet decompositions of the signal. Specifically, we use the Meyer wavelets of order 2. 

We now show that the rate achieved by our estimate $\widetilde f$ corresponds to the lower bound~\eqref{rate:LB}, up to a constant. We have the following corollary, easily derived from Theorem~\ref{theo:GL}, and based on the following assumption.
\begin{Hyp}\label{Hyp2}
	The kernel $K$ is of order $\ell=\lfloor\beta\rfloor$, meaning that the functions
	$x\mapsto x^jK(x)$, $j=0,1,\ldots,\ell$ are integrable and satisfy
	$$\int K(x)dx=1,\quad \int x^jK(x)dx=0, \quad j=1,\ldots,\ell.$$
\end{Hyp}
\begin{remark}
	See Proposition~1.3 of \cite{tsybakov} or Section~3.2 of \cite{GL14} for the construction of kernels satisfying Assumptions~\ref{Hyp1} and \ref{Hyp2}.
\end{remark}

\begin{Cor}\label{cor} Suppose that Assumptions~\ref{Hyp1} and \ref{Hyp2} are satisfied. We take $\eta>0$ and we consider the estimate $\widetilde f$ such that the set of bandwidths $\mathcal H$ is 
	$$\mathcal H=\left\{D^{-1}:\quad D\in\llbracket \log n;\delta n^{\frac13}\rrbracket\right\},$$ 
	for some constant $\delta>0$. Then, for $n$ large enough,

	$$\sup_{f_X\in{\mathcal S}^\beta(L,r,b)}\E\left[\|\widetilde f-f_X\|_{2,T}^2\right]\leq  C_4 \left[L^{\frac{2\beta+6}{2\beta+3}}n^{-\frac{2\beta}{2\beta+3}}+Ln^{-1}\right],$$
	where $C_4$ only depends on $\delta$, $\eta$, $K,$ $\beta,$ $r, b$, $a$ and $T$.

\end{Cor}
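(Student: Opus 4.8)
The plan is to combine the oracle inequality of Theorem~\ref{theo:GL} with a standard bias bound over the Sobolev ball. The corollary's bandwidth set $\mathcal H=\{D^{-1}:\,D\in\llbracket\log n;\delta n^{1/3}\rrbracket\}$ has $\min\mathcal H=(\delta n^{1/3})^{-1}$ and $\max\mathcal H=(\log n)^{-1}=o(1)$, so the hypotheses of Theorem~\ref{theo:GL} are satisfied (invoking the Remark that allows $\delta n$ to be replaced by any polynomial in $n$). Applying it and then decomposition~\eqref{BV} together with Lemma~\ref{lem:varsym} gives
$$\E\big[\|\widetilde f-f_X\|_{2,T}^2\big]\le C_1\inf_{h\in\mathcal H}\big(B_h^2+v_h\big)+\frac{C_2}{n},\qquad v_h=\frac{aT\|f_X\|_1\|K'\|_2^2}{2nh^3},$$
so it remains to bound $\inf_{h\in\mathcal H}(B_h^2+v_h)$ by the announced rate, uniformly over $f_X\in\mathcal S^\beta(L)$.

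The core step is the bias bound. Since $[0,T]\subset\R$ we have $B_h\le\|K_h\star f_X-f_X\|_2$, and by Plancherel together with $(K_h\star f_X)^*(\xi)=K^*(h\xi)f_X^*(\xi)$,
$$B_h^2\le\frac{1}{2\pi}\int\frac{|K^*(h\xi)-1|^2}{(\xi^2+1)^\beta}\,|f_X^*(\xi)|^2(\xi^2+1)^\beta\,d\xi.$$
Under Assumption~\ref{Hyp2} the vanishing-moment conditions force $K^*(u)-1=O(|u|^{\ell+1})$ near $0$ (the compact support of $K$ makes the Taylor remainder integrable), while $|K^*(u)|\le\|K\|_1$ globally; splitting according to whether $|h\xi|\le 1$ or $|h\xi|>1$ and using $\ell+1>\beta$ then yields $\sup_\xi\frac{|K^*(h\xi)-1|^2}{(\xi^2+1)^\beta}\le C h^{2\beta}$. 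With the Sobolev constraint this produces $B_h^2\lesssim L^2 h^{2\beta}$, so I will bound the risk of $\widetilde f_h$ by $L^2 h^{2\beta}+C n^{-1}h^{-3}$.

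It then remains to optimize $h\mapsto L^2 h^{2\beta}+C n^{-1}h^{-3}$ and to check that the balancing bandwidth lies in $\mathcal H$. Balancing the two terms gives $h^\star\asymp(L^2 n)^{-1/(2\beta+3)}$, and substituting back yields $L^{6/(2\beta+3)}n^{-2\beta/(2\beta+3)}$ after the simplification $2-\frac{4\beta}{2\beta+3}=\frac{6}{2\beta+3}$. Because $1/(2\beta+3)<1/3$, for $n$ large one has $(\delta n^{1/3})^{-1}\le h^\star\le(\log n)^{-1}$, and rounding $1/h^\star$ to the nearest integer $D\in\llbracket\log n;\delta n^{1/3}\rrbracket$ provides an admissible $h\in\mathcal H$ with $h\asymp h^\star$; hence $\inf_{h\in\mathcal H}(B_h^2+v_h)\lesssim L^{6/(2\beta+3)}n^{-2\beta/(2\beta+3)}$. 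Finally, since $2\beta/(2\beta+3)<1$ the leftover term $C_2/n$ is negligible against the rate and is absorbed into the constant $C_4$.

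The main obstacle is really the bias estimate, i.e.\ converting the order-$\ell$ condition on $K$ (Assumption~\ref{Hyp2}) into the Fourier decay $|K^*(u)-1|\lesssim|u|^{\ell+1}$ and matching it against the Sobolev weight $(\xi^2+1)^\beta$ \emph{uniformly} in $\xi$; once $B_h^2\lesssim L^2 h^{2\beta}$ is established, the remaining optimization and the membership check $h^\star\in\mathcal H$ are routine.
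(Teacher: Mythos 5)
Your proof is correct, and its skeleton coincides with the paper's: apply Theorem~\ref{theo:GL}, use the decomposition \eqref{BV} with the variance expression of Lemma~\ref{lem:varsym}, bound the bias by $L h^{\beta}$ up to a constant, plug in $h\asymp (L^2n)^{-1/(2\beta+3)}$ after checking it can be realized in $\mathcal H$, and absorb the $C_2/n$ term using $2\beta/(2\beta+3)<1$. Where you genuinely diverge is in the key step, the bias bound over the Sobolev ball. The paper does not compute anything in the frequency domain: it invokes the embedding of ${\mathcal S}^\beta(L)$ into the Nikol'ski ball $H(\beta,L')$ (with $L'\asymp L$) and then cites the standard spatial-domain result (Proposition~1.5 of Tsybakov, together with Kerkyacharian--Lepski--Picard) giving $\|K_h\star f_X-f_X\|_2\leq MLh^{\beta}$ for kernels of order $\ell=\lfloor\beta\rfloor$. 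You instead stay entirely within the Fourier definition of ${\mathcal S}^\beta(L)$: Plancherel, the global bound $|K^*(u)-1|\leq 1+\|K\|_1$, the local bound $|K^*(u)-1|\lesssim|u|^{\ell+1}$ coming from the vanishing moments and compact support, and the split at $|h\xi|=1$ with $\ell+1>\beta$, yielding $\sup_\xi |K^*(h\xi)-1|^2(\xi^2+1)^{-\beta}\lesssim h^{2\beta}$ and hence $B_h^2\lesssim L^2h^{2\beta}$. Both arguments are sound; yours has the merit of being self-contained (no function-space embedding, no external citation) and of matching the Sobolev weight directly, while the paper's route is shorter to write and implicitly shows the corollary holds over the larger Nikol'ski class $H(\beta,L')$, not just the Sobolev ball. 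The remaining bookkeeping in your write-up (membership of the rounded bandwidth in $\mathcal H$ via $\log n\leq (L^2n)^{1/(2\beta+3)}\leq\delta n^{1/3}$ for $n$ large, and the use of the Remark to accommodate $\min\mathcal H=(\delta n^{1/3})^{-1}$) is exactly what the paper leaves implicit.
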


Corollary~\ref{cor}, proved in Section~\ref{sec:proofcor}, shows that our estimator is adaptive minimax, i.e. it achieves the best possible rate (up to a constant) and the bandwidth selection does not depend on the spaces parameters $(\beta,L)$ on the whole range $\{0<\beta<\ell+1, L>0\}$, where $\ell$ is the order of the chosen kernel. We have established the optimality of our procedure.
\section{Simulation study and numerical tuning}
\label{sec:numerical}

In the following we use numerical simulations to tune the hyperparameters of our estimator and to assess the performance of our deconvolution procedure. We consider different shapes for the Poisson process intensity to challenge our estimator in different scenarii, by first generating Poisson processes on $[0,1]$ based on the Beta probability distribution function,  with $f_{\text{unisym}}= \text{Beta}(2,2)$ (unimodal symmetric),  $f_{\text{bisym}}= 0.5 \times  \text{Beta}(2,6) + 0.5 \times \text{Beta}(6,2)$ (bimodal symmetric), $f_{\text{biasym}}= 0.5 \times \text{Beta}(2,20) + 0.5 \times \text{Beta}(2,2)$ (bimodal assymmetric). We also generate Poisson processes with Laplace distribution intensity (location $5$, scale $0.5$) to consider a sharp form and a different support. In this case, we consider that $T=10$. We consider a uniform convolution model with increasing noise ($a \in \{0.05, 0.1 \}$ for Beta, $a \in \{0.5, 1, 2, 3 \}$ for Laplace) and Poisson processes with increasing number of occurrences ($n \in \{500,1000 \}$). For each set $(f_X,n,a)$, we present the median performance over 30 replicates. In order to keep a bounded variance of our estimators, we explore different values of $h$ using a grid denoted by $\mathcal{H}$, with minimum value $h_{min}=\left(aT/n \right)^{1/3}$ (see Lemma~\ref{lem:varsym} and Corollary~\ref{cor}). For Beta intensity we consider a grid $\mathcal{H}$ from $h_{min}$ to $0.5$ with steps of $0.025$ and for Laplace intensities from $h_{min}$ to 10 with steps of $0.5$.  Finally, our procedure is computed with an Epanechnikov kernel, that is $K(u)= 0.75 (1-u^2) \mathbbm{1}_{\vert u \vert \leq 1}$. Our estimator is challenged by the oracle estimator, that is the estimator $\widehat f_{h^*}$, with $h^*$  minimizing (with respect to $h$) the mean squared error $\mathbb{E}\|f_X-\widehat f_{h}\|_{2,T}^2$, with $f_X$ the true intensity. 

To assess the interest of designing a deconvolution method dedicated to uniform noise, our method is also competed with a deconvolution procedure for Gaussian noise  \citep{MR2045631}, available in the \texttt{fDKDE} \texttt{R}-package. All methods are compared to a density estimator without deconvolution calibrated by cross-validation.


%
\subsection{Hyperparameter tuning}
\label{subsec:calib_eta_fix}

Our selection procedure for parameter $h$ is based on the two-step method described in Section \ref{sec:estim}. Using this procedure in practice requires to tune the value of the hyper-parameter $\eta$ that is part of the penalty $c (\eta)$:
$$c(\eta)=(1+\eta)(1+\|K\|_1)\sqrt{\frac{aT}{2}}\|K'\|_2,$$
with $K$ the Epanechnikov kernel in our simulations. This penalty is at the core of the two-step method that consists in computing:
\begin{equation}A_\eta(h)=\max_{t\in \mathcal H}\left\{\|\widetilde f_t-\widetilde f_{h,t}\|_{2,T}-c(\eta) \frac{\sqrt{N_+}}{nt^{3/2}}\right\}_+, 
	\label{eq:A_h}
\end{equation}
followed by
\begin{equation}
	\widehat{h} =\argmin_{h\in \mathcal H}\left\{A_{\eta}(h)+ c(\eta) \frac{\sqrt{N_+}}{nh^{3/2}}   \right\}.
	\label{eq:h_hat}
\end{equation}

We propose to investigate if we could find a "universal" value of parameter $\eta$ that would be appropriate whatever the form of the intensity function. For a grid of $\eta$ in $[-1;1]$, we compare the mean squared errors (MSE) of estimators calibrated with different values of $\eta$ to the MSE achieved by the oracle estimator that achieves the smallest MSE over the grid~$\mathcal{H}$.
Figure \ref{fig:MSE_eta_fix_beta} shows that the optimal choice of $\eta$ depends on the shape of the true intensity ($\sim-0.15$ for Beta,$\sim -0.9$ for Laplace). For Beta intensities, the MSE curve is minimal and almost flat for $\eta \in [-0.4,0]$. Since in this range the MSE remains close to its oracle for Laplace intensities, we propose to choose $\eta = -0.3$ as a reasonable trade-off to obtain good performance in most settings.


\begin{figure}
	\begin{center}
		\includegraphics[scale=0.4,trim=0mm 0mm 0mm 0mm]{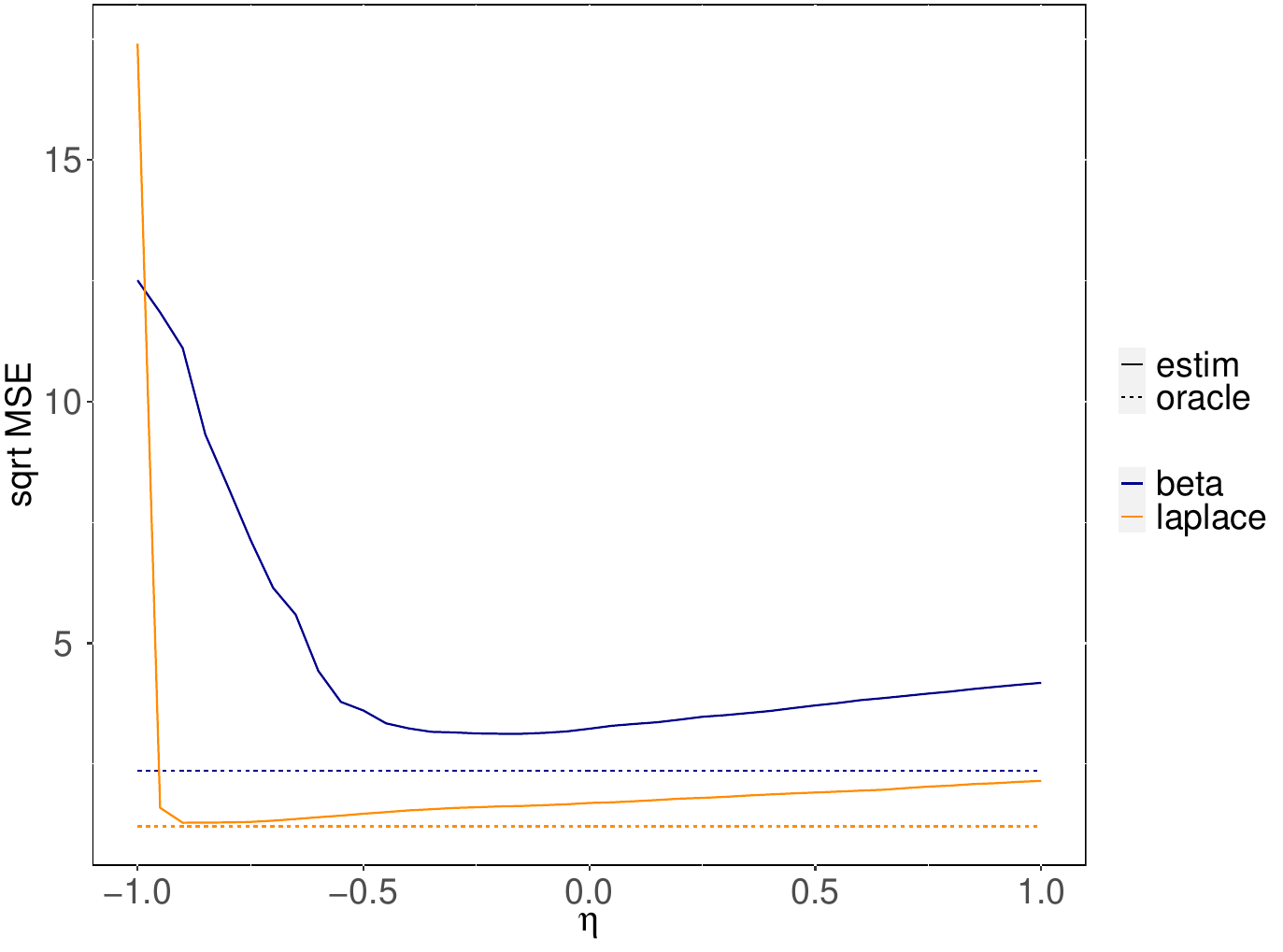}  
	\end{center}
	\caption{Mean squared error (square root) obtained on 30 simulations with our estimator for different values of $\eta$ and with an oracle estimator (dotted line). The mean squared errors are computed as mean values on 30 simulations and over all different Beta scenarii (blue line) and Laplace scenarii (orange line)}
	\label{fig:MSE_eta_fix_beta} 	
\end{figure}

\subsection{Results and comparison with other methods}

\begin{figure}
	\begin{center}  	
		\begin{tabular}{c}
			Beta intensities \\ 
			\includegraphics[scale=0.5,trim=0mm 0mm 0mm 0mm]{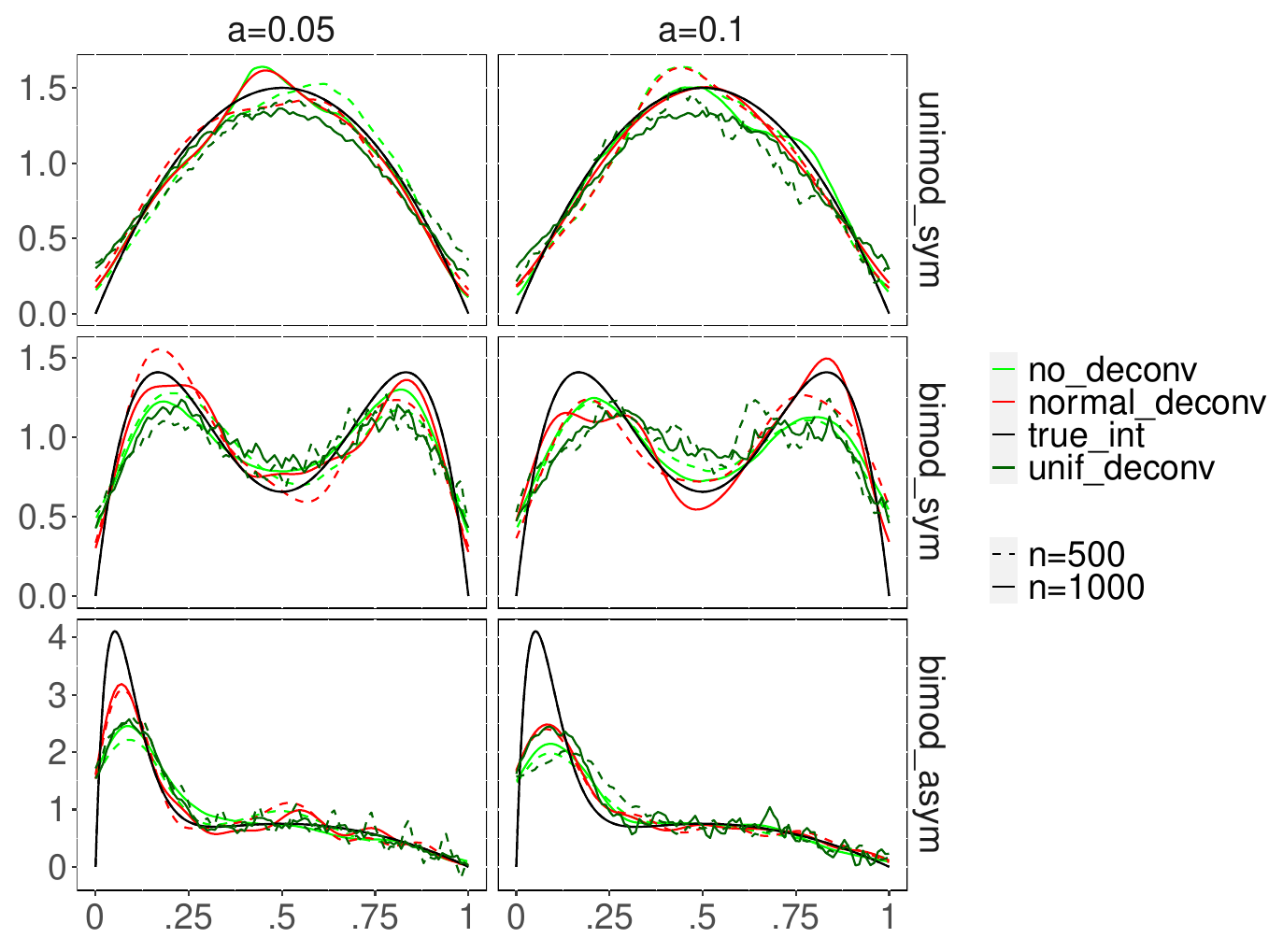} \\
			Laplace intensities \\
			\includegraphics[scale=0.5,trim=0mm 0mm 0mm 0mm]{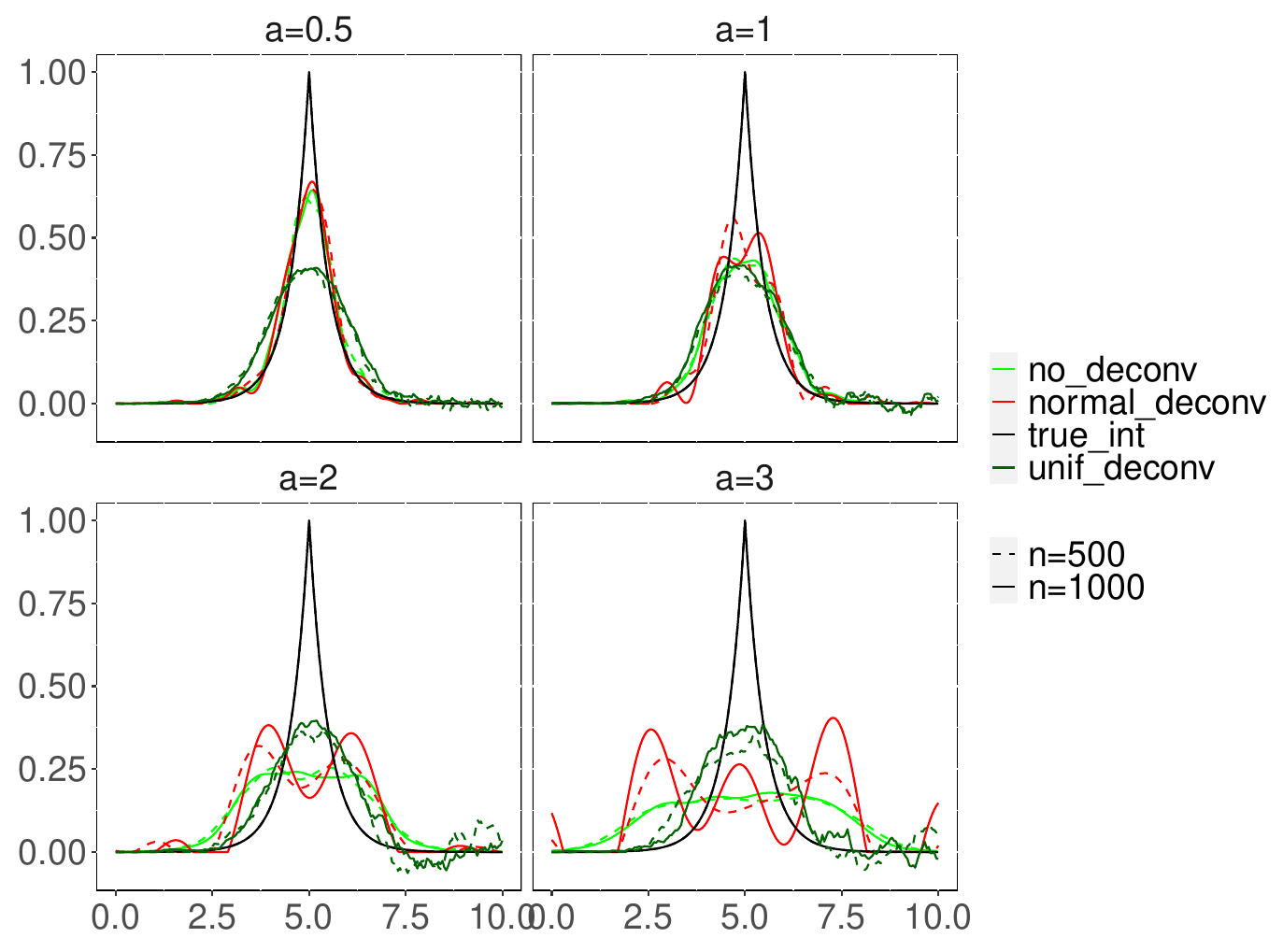}
		\end{tabular}
	\caption{Estimated intensity obtained with our estimator (\texttt{unif$\_$deconv}), with the true intensity (\texttt{true$\_$int}) and the \texttt{fDKDE} estimator (\texttt{normal$\_$deconv}) and with a density estimator without deconvolution (\texttt{no$\_$deconv}) with Epanechnikov kernel. Each reconstruction is obtained with a generated dataset that verifies the median MSE over 30 simulations.}
		\label{fig:compar_delaigle}
	\end{center}  	
\end{figure}

Figure \ref{fig:compar_delaigle} highlights two different behaviours depending on the size of the noise and the shape on the true intensity: when the noise is small ($a \leq 0.1$ for beta intensities and $a \leq 1$ for Laplace intensities), the estimator proposed by \cite{MR2045631} is very efficient. This was quite expected that the distribution of the noise would not matter when its variance is small: we see indeed that the density estimator without deconvolution also performs well in such context. However, when the noise increases and the true intensity is sharp ($a \geq 2$ for Laplace intensities), we observe major differences and our method designed for uniform noises is the only one that can provide an accurate intensity estimation. These results are confirmed with the mean-squared errors computed for each method and displayed in Figure~\ref{fig:compar_mse_delaigle}.

These results motivate the application on genomic data proposed in Section \ref{sec:appli}, where the measurement errors can be large compared to the average distance between points.

\begin{figure}
	\begin{center}  	
		\begin{tabular}{c}
			Beta intensities \\ 
			\includegraphics[scale=0.5,trim=0mm 0mm 0mm 0mm]{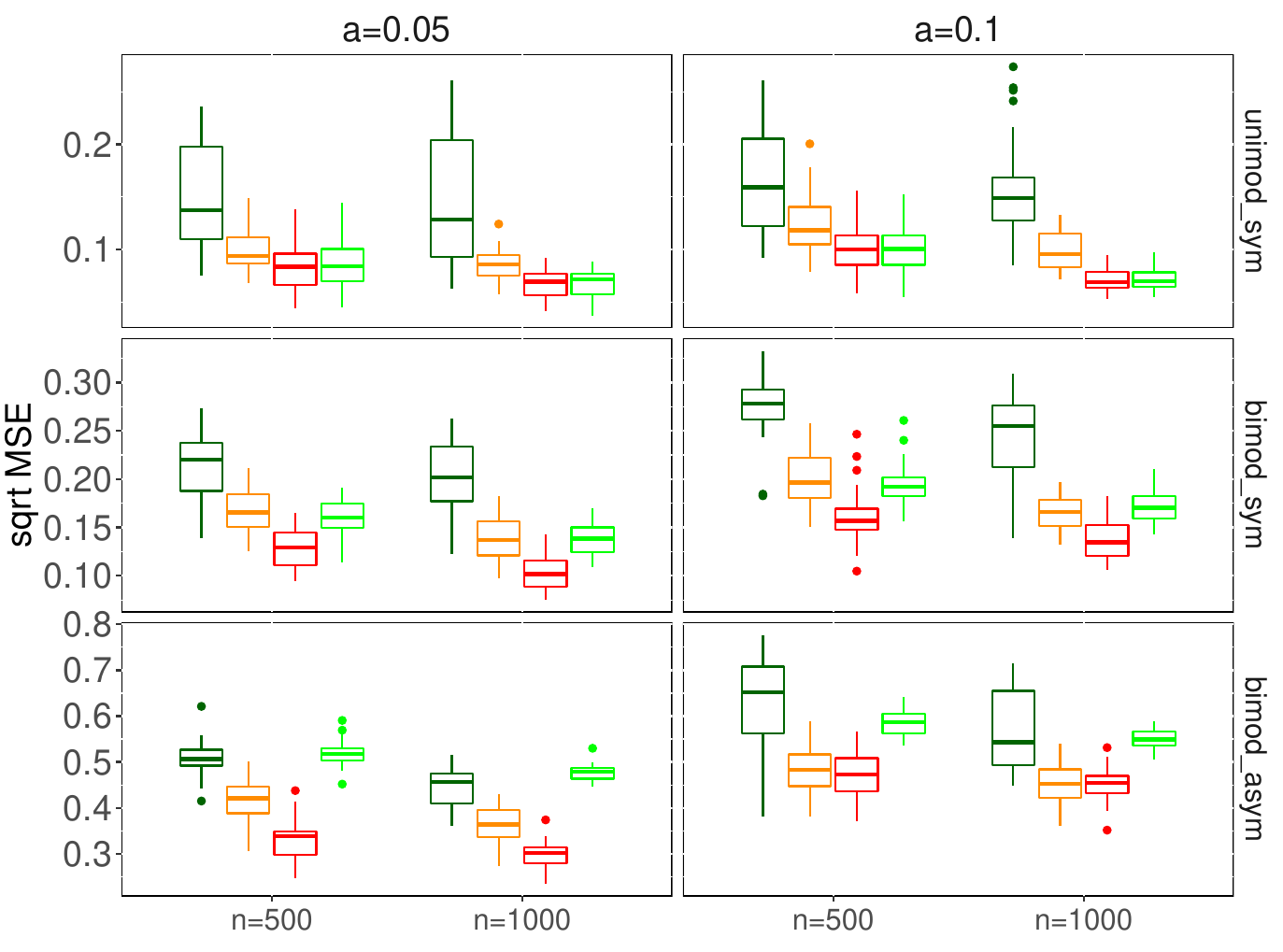} \\
			Laplace intensities \\
			\includegraphics[scale=0.5,trim=0mm 0mm 0mm 0mm]{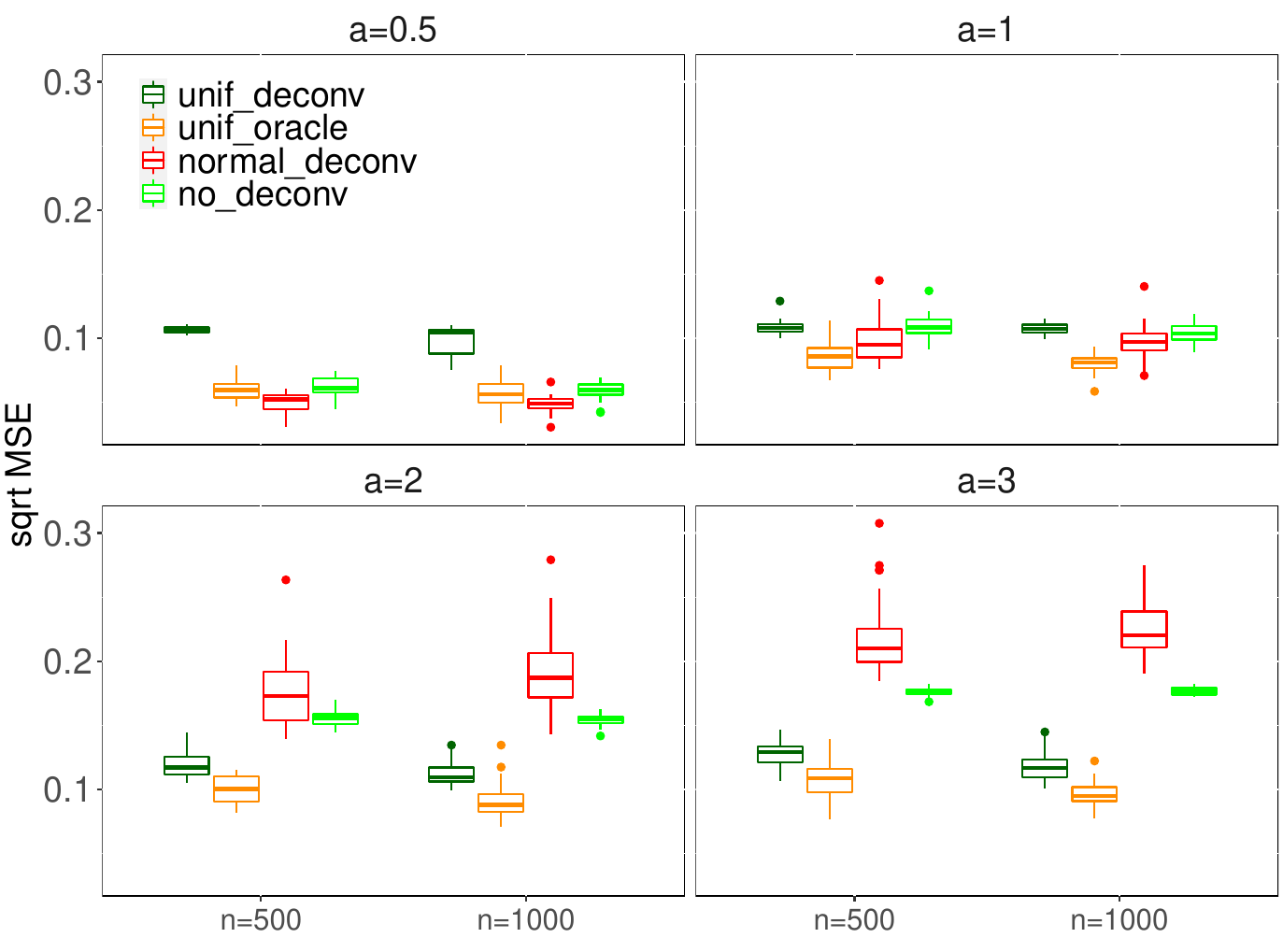}
		\end{tabular}
		\caption{Mean squared errors (square root) obtained with our estimator \texttt{unif$\_$deconv}), with the oracle estimator \texttt{unif$\_$oracle}) for uniform deconvolution, with the \texttt{fDKDE} estimator \texttt{normal$\_$deconv}) and with a density estimator without deconvolution \texttt{no$\_$deconv}) with Epanechnikov kernel. The estimations are provided for 30 simulations in each scenario. }
		\label{fig:compar_mse_delaigle}
	\end{center}  	
\end{figure}

\subsection{Computational times}

The calibration procedure for the bandwidth selection requires multiple integral computations, in particular if we use a thin grid $\mathcal{H}$. However, the computational times remains reasonable for one estimation especially when the size of the noise is not too small, as summarized in Table \ref{tab:comp_times}. The value of $a$  determines indeed the number of non-zero terms in the double sum that appears in the definition of the estimator \eqref{def:final} (the smaller $a$, the larger number of terms), which explains that the longest computational times is obtained for the smaller value of $a$ and the larger number of observations $n$. The code, implemented in \texttt{R}, is parallelized and uses the \texttt{Rcpp} package in order to reduce the computational cost.

\begin{table}
\begin{minipage}{6cm}
\quad \quad (a) Beta intensities \\
\begin{tabular}{c|c|c}
 & a=0.05 & a=0.1 \\
 \hline
n=500 & 1123 & 241 \\
 \hline
 n=1000 & 4931 & 1144
\end{tabular}
\end{minipage}
\hfill
\begin{minipage}{10cm}
\quad \quad \quad \quad (b) Laplace intensities \\
\begin{tabular}{c|c|c|c|c}
 & a=0.5 & a=1 & a=2 &a=3 \\
 \hline
n=500 &  2.79 & 0.71 & 0.31 & 0.29\\
 \hline
 n=1000 & 9.76 & 2.08 & 0.78 & 0.76
\end{tabular}
\end{minipage}
\caption{Computational times (in seconds) associated with one estimation for different values of $a$ and $n$ considered in the numerical study. The computations were run on 16 cores of a server Intel Xeon E5-4620 2.20GHz .}
\label{tab:comp_times}
\end{table}

\section{Deconvolution of Genomic data}\label{sec:appli}

Next generation sequencing technologies (NGS) have allowed the fine mapping of eukaryotes replication origins that constitute the starting points of chromosomes duplication. To maintain the stability and integrity of genomes, replication origins are under a very strict spatio-temporal control, and part of their positioning has been shown to be associated with cell differentiation \citep{PCA14}. The spatial organization has become central to better understand genomes architecture and regulation. However, the positioning of replication origins is subject to errors, since any NGS-based high-throughput mapping consists of peak-calling based on the detection of an exceptional enrichment of short reads \cite{PCA14}. Consequently, the true positions of the replication starting points are unknown, but rather inferred from genomic intervals. The spatial control of replication being very strict, the precise quantification of the density of origins along chromosomes is central, but should account for this imprecision of the mapping step. The interval shape of the data makes the uniform assumption of the noise particularly appropriate. However, other types of distributions could be considered. We compare our results to those obtained by the estimator proposed by \cite{MR2045631} and implemented in the R package \texttt{fDKDE}, which was developed to handle errors with Gaussian distribution. Both deconvolution estimators provide an intensity estimation that is less smooth than the one obtained without accounting for the error positioning. However, the estimator of \cite{MR2045631}  identifies three regions with a high density of origins while ours shows several sharp peaks which suggests the existence of clusters of origins, the location of which can be precisely identified.

The comparison shows that the Gaussian-based estimator is overly smooth regarding the underlying biological process. Indeed, replication origins are known to be organized according to the so-called replication domains that are $\leq 1$Mb on average \citep{pmid25409831}. The deconvoluted estimator based on uniform errors provides an intensity that shows peaks that are approximatively $\leq 1$Mb wide, whereas the Gaussian-based estimator shows clusters of size $\sim 3$Mb.
\begin{figure}
	\begin{center}
		\includegraphics[scale=0.5,trim=0mm 0mm 0mm 0mm]{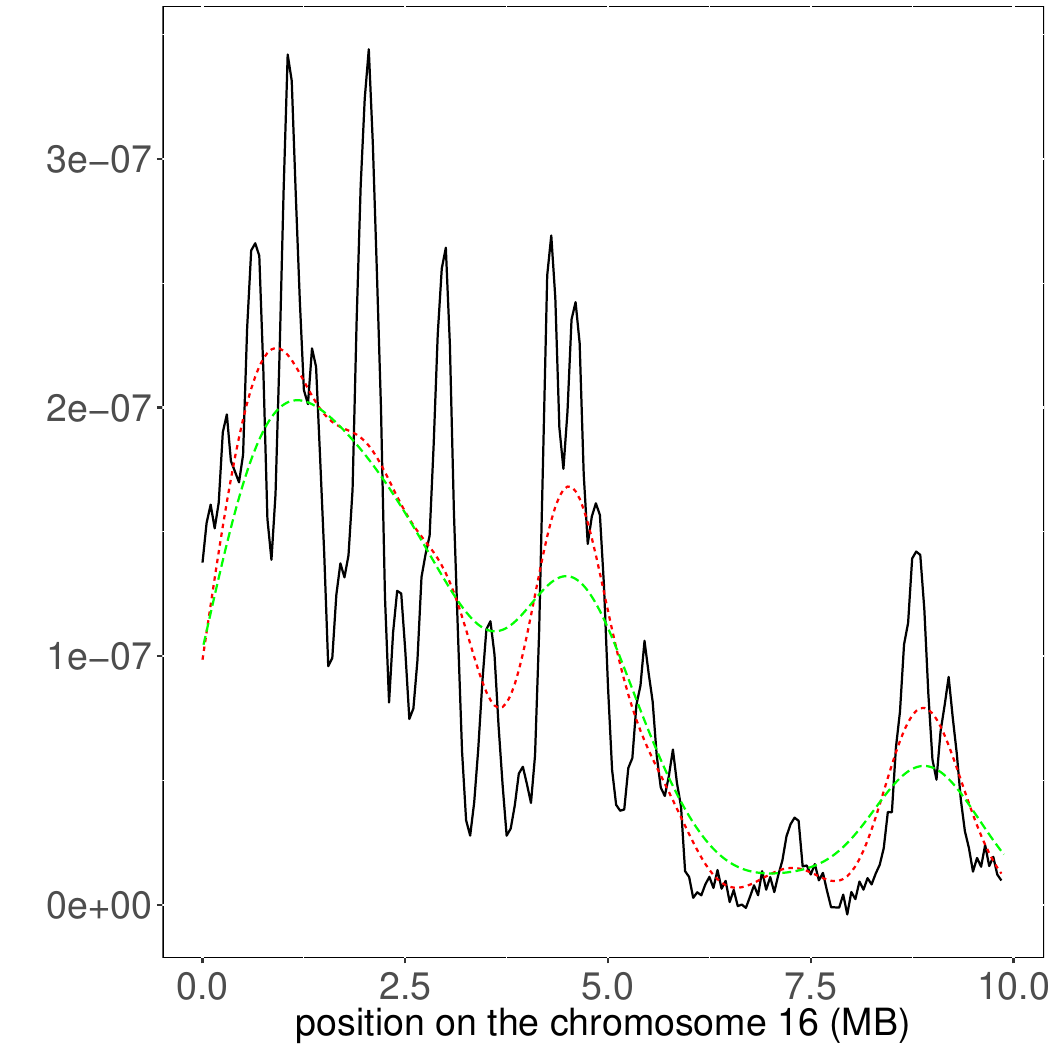}
	\end{center}
	\caption{ Estimation of the intensity of human replication origins along chromosome 16 ($N^+=874$), for 3 procedures:  our procedure (Epanechnikov kernel calibrated by our data-driven procedure, black plain line), the two procedures implemented in the $\texttt{fDKDE}$ package (red dashed line) and the procedure implemented in the $\texttt{R}$ $\texttt{density}$ function (Epanechnikov kernel) with the bandwidth calibrated by cross-validation (green dashed line).}
\end{figure}

In a second step we focused on the spatial distribution of G-quadruplex motifs that were shown to be associated with replication initiation in vertebrates \citep{PCA14}. Their precise role in replication remains unknown, and their effect may be associated with some epigenetic response \citep{HBL16} which makes their positional information very valuable regarding the biophysics constraints characterizing the DNA molecule. Thus we considered the spatial distribution of G-quadruplexes \citep{ZZH2020} along all replication origins \citep{PCA14}, by considering the initiation peak as the reference position (Figure \ref{fig:appli_g4}). When estimating the spatial distribution of G-quadruplex motifs around replication origins,
the estimator without deconvolution provides an almost flat estimated density with one small central peak. The $\texttt{fDKDE}$ estimator highlights one central peak and several smaller peaks. Finally, our estimator reveals a periodic clustering pattern of G-quadruplexes occurrences along replication origins, which is completely masked when computing a standard density estimator and only slightly suggested with the $\texttt{fDKDE}$ estimator. This clustering pattern could be related to the periodic organization of nucleosomes and chromatin around replication origins as suggested by experimental evidence \cite{pmid31332171}. Hence, our estimator provides a finer-scale resolution for the accumulation pattern of G-quadruplexes in the vicinity of replication initiation sites that could be biologically relevant.

\begin{figure}
	\begin{center}
		\includegraphics[scale=0.5,trim=0mm 0mm 0mm 0mm]{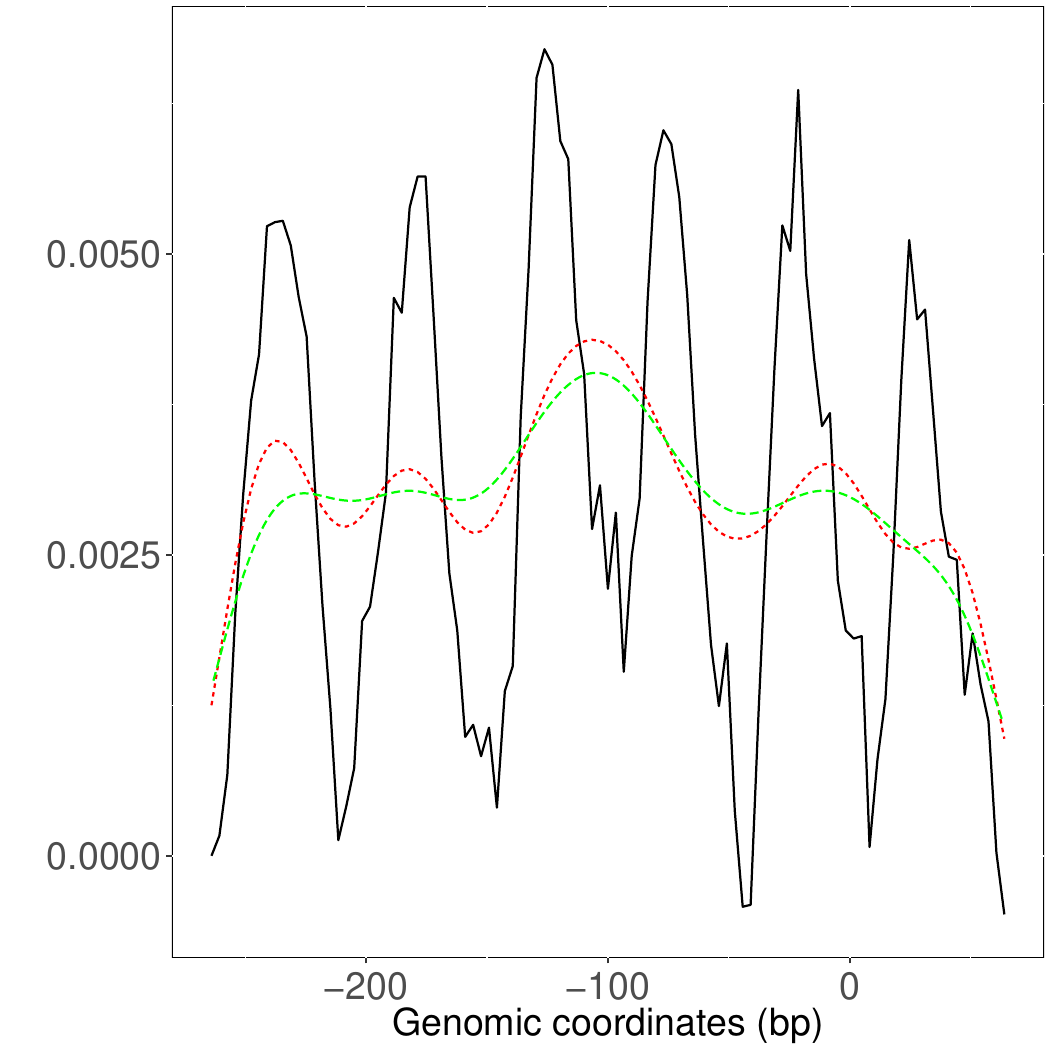}
	\end{center}
	\caption{Estimation of the intensity of G-quadruplexes around human replication origins along chromosome 1 ($a=25.9$ the average length of motifs, $N_+ = 2559$) for 3 procedures:  our procedure (Epanechnikov kernel calibrated by our data-driven procedure, black plain line), the two procedures implemented in the $\texttt{fDKDE}$ package (red dashed line) and the procedure implemented in the $\texttt{R}$ $\texttt{density}$ function (Epanechnikov kernel) with the bandwidth calibrated by cross-validation (green dashed line). The reference position 0 indicates the position of the replication initiation peak, and we compute the spatial distribution of Gquadruplexes upstream or downstream (negative position values) of origins.}
	\label{fig:appli_g4}
\end{figure}

\newpage
\section{Proofs}\label{sec:proofs}
If $\theta$ is a vector of constants (for instance $\theta=(T,a,K))$, we denote by $\square_\theta$ a positive constant that only depends on $\theta$ and that may change from line to line.

\medskip

\noindent In the sequel, we use at several places the following property: Setting
$$S_k:\ x\longmapsto \sum_{i=1}^{N_+}K_h'\bigg( x-(2k+1)a-Y_i \bigg),$$
since $Ah\leq a$, $S_k$ and $S_{k'}$ have disjoint supports if $k\not=k'$.


\subsection{Proof of Lemma~\ref{lem:varsym}}
\begin{proof}
	Considering first $\widehat f_h$, we have:
	\begin{align*}
		\E[\widehat f_h(x)]&
		=\frac{2a}{nh^2}\sum_{k=0}^{+\infty}\int_{\R}  K'\left(\frac{x-(2k+1)a-u}{h}\right)nf_Y(u)du\\
		&=\frac{1}{h^2}\sum_{k=0}^{+\infty}\int_{\R}  K'\left(\frac{x-(2k+1)a-u}{h}\right)[F_X(u+a)-F_X(u-a)]du\\
		&=\frac{1}{h^2}\sum_{k=0}^{+\infty}\int_{\R}  K'\left(\frac{x-2ka-v}{h}\right)F_X(v)dv-\int_{\R}  K'\left(\frac{x-2(k+1)a-v}{h}\right)F_X(v)dv
		\\
		&=\frac{1}{h^2}\int_{\R}  K'\left(\frac{x-v}{h}\right)F_X(v)dv\\
		&=\frac{1}{h}\int_{\R}  K\left(\frac{x-v}{h}\right)f_X(v)dv=(K_h\star f_X)(x).
	\end{align*}
	The first point is then straightforward by using the definition of $\widetilde f_h$. For the second point, observe that 
	\begin{align*}
		\widetilde f_h(t)-\E[\widetilde f_h(t)]&=
		\frac{a}{nh^2}\sum_{k=-\infty}^{+\infty}s_k\int K'\left(\frac{t-(2k+1)a-u}{h}\right)\left[dN^Y_u-nf_Y(u)du\right]\\
		&=\frac{a}{nh}\int_\R L_h(t-u)\left[dN^Y_u-nf_Y(u)du\right],
	\end{align*}
	with
	$$L_h(x):=\frac{1}{h}\sum_{k=-\infty}^{+\infty} s_kK'\left(\frac{x-(2k+1)a}{h}\right).$$
	Using the support $[-A,A]$ of $K$, for each $x$
	\begin{align*}
		(L_h(x))^2&=\frac{1}{h^2}\left(\sum_{k=-\infty}^{+\infty}s_k K'\left(\frac{x-(2k+1)a}{h}\right)\right)^2\\
		&=\frac{1}{h^2}\sum_{k=-\infty}^{+\infty} K'^2\left(\frac{x-(2k+1)a}{h}\right)
	\end{align*}
	as soon as $Ah\leq a$. 
	We have:
	\begin{align*}
		\int_0^T\E[(\widetilde f_h(t)-\E[\widetilde f_h(t)])^2]dt
		&=\frac{a^2}{n^2h^2}\int_0^T\Var\left(\int_\R L_h(t-u)dN^Y_u\right)dt\\
		&=\frac{a^2}{n^2h^2}\int_0^T\int_\R L_h^2(t-u)nf_Y(u)dudt\\
		&=\frac{a^2}{nh^4}\int_0^T\int_\R\sum_{k=-\infty}^{+\infty}\left(K'\left(\frac{t-u-(2k+1)a}{h}\right)\right)^2f_Y(u)dudt\\
		&=\frac{a^2}{nh^4}\int_0^T\int_\R\left(K'\left(\frac{t-v}{h}\right)\right)^2\sum_{k=-\infty}^{+\infty}f_Y(v-(2k+1)a)dvdt\\
		&=\frac{a}{2nh^4}\int_0^T\int_\R\left(K'\left(\frac{t-v}{h}\right)\right)^2(\lim_{+\infty} F_X -\lim_{-\infty} F_X)dvdt,
	\end{align*}
	which yields
	$$
	\E\Big[\|\widetilde f_h(t)-\E[\widetilde f_h(t)]\|_{2,T}^2\Big]= \frac{aT\|f_X\|_1\|K'\|_2^2}{2nh^3}.
	$$
\end{proof}
\subsection{Auxiliary lemma}
Our procedure needs the following result.
\begin{lemma}\label{lem:inversion}
	For any $h,t\in\mathcal H$,
	$$\widetilde f_{h,t}=\widetilde f_{t,h}.$$
\end{lemma}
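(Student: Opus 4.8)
The plan is to express the twice-regularized estimator as an iterated convolution and then exploit the commutativity of convolution. First I would rewrite $\widetilde f_t$ in a form that exposes its convolution structure: writing $K_t'(u)=\tfrac{1}{t^2}K'(u/t)$ for the derivative of $K_t(u)=\tfrac1t K(u/t)$, definition \eqref{def:final} becomes
$$\widetilde f_t(y)=\frac{a}{n}\sum_{k=-\infty}^{+\infty}s_k\sum_{i=1}^{N_+}K_t'\big(y-(2k+1)a-Y_i\big).$$
Plugging this into $\widetilde f_{h,t}=K_h\star\widetilde f_t$ and interchanging the (finite, see below) sums with the convolution integral, I would obtain
$$\widetilde f_{h,t}(x)=\frac{a}{n}\sum_{k}s_k\sum_{i}\big(K_h\star K_t'\big)\big(x-(2k+1)a-Y_i\big),$$
so that the whole statement reduces to a symmetry property of the single building block $K_h\star K_t'$.

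Second, I would establish the identity $K_h\star K_t'=K_t\star K_h'$. Both sides equal $(K_h\star K_t)'$: differentiation of a convolution may be transferred onto either factor, giving $(K_h\star K_t)'=K_h\star K_t'$, while by commutativity $K_h\star K_t=K_t\star K_h$ yields $(K_h\star K_t)'=K_t\star K_h'$ as well. Equivalently this is a one-line integration by parts, the boundary terms vanishing because $K$ is compactly supported (Assumption~\ref{Hyp1}). Substituting $K_h\star K_t'=K_t\star K_h'$ into the displayed expression and reversing the first computation with the roles of $h$ and $t$ exchanged gives exactly $K_t\star\widetilde f_h(x)=\widetilde f_{t,h}(x)$, which is the claim.

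The only genuine care needed, and the step I expect to be the main (if modest) obstacle, is to justify the interchange of the doubly infinite sum over $k$ with the convolution integral, since $\sum_k s_k$ is not absolutely convergent. Here I would invoke the standing hypothesis $h,t\leq a/A$: for fixed $x$ the factor $K_h(x-y)$ confines $y$ to the bounded interval $[x-Ah,x+Ah]$, and on this interval the compact support of $K_t'$ together with $At\leq a$ forces all but finitely many indices $k$ to contribute zero, exactly as in the disjoint-support remark opening Section~\ref{sec:proofs}. Consequently every sum is in fact finite and every integrand is bounded with compact support, so Fubini's theorem and the term-by-term manipulations above are fully justified.
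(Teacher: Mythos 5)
Your proof is correct and follows essentially the same route as the paper's: both reduce the claim to the identity $K_h\star (K_t)' = (K_h\star K_t)' = (K_h)'\star K_t$ after pulling the convolution inside the sums of shifted kernel-derivative terms defining the estimator. The only differences are cosmetic — the paper treats $\widehat f_t$ and $\check f_t$ separately before combining them into $\widetilde f_t$, and it performs silently the sum/integral interchange that you justify explicitly (and correctly) via the disjoint-support observation under $Ah\leq a$, $At\leq a$.
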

\begin{proof}
	Since $K_h'(x)=(1/h^2)K'(x/h)$, we can write
	\begin{align*} 
		K_h\star \widehat f_t &
		=K_h\star\left(\frac{2a}{n}\sum_{k=0}^{+\infty}\sum_{i=1}^{N_+}(K_t)'({x-(2k+1)a-Y_i})\right)\\
		&=\frac{2a}{n}\sum_{k=0}^{+\infty}\sum_{i=1}^{N_+}K_h\star(K_t)'\left({x-(2k+1)a-Y_i}\right)
	\end{align*}
	Using that $K_h\star(K_t)'=(K_h\star K_t)'=(K_h)'\star K_t$, we obtain 
	$K_h\star \widehat f_t=K_t\star \widehat f_h$.
	
	In the same way, we can prove 
	$K_h\star \check f_t=K_t\star \check f_h$ and then 
	$K_h\star \widetilde f_t=K_t\star \widetilde f_h$.
\end{proof}
\subsection{Proof of Theorem~\ref{theo:GL}}
\label{sec:preuvebornesup}
Remember that
$$\hat h:=\argmin_{h\in \mathcal H}\left\{{\mathcal A}(h)+\frac{c\sqrt{N_+}}{nh^{3/2}}\right\},$$
with 
$${\mathcal A}(h):=\max_{t\in \mathcal H}\left\{\|\widetilde f_t-\widetilde f_{h,t}\|_{2,T}-\frac{c\sqrt{N_+}}{nt^{3/2}}\right\}_+. $$
For any $h\in\mathcal H$,
$$\|\widetilde f_{\hat h}-f_X\|_{2,T}\leq A_1+A_2+A_3,$$
with 
$$A_1:=\|\widetilde f_{\hat h}-\widetilde f_{\hat h,h}\|_{2,T}\leq {\mathcal A}(h)+\frac{c\sqrt{N_+}}{n\hat h^{3/2}},$$
$$A_2:=\|\widetilde f_{ h}-\widetilde f_{\hat h,h}\|_{2,T}\leq {\mathcal A}(\hat h)+\frac{c\sqrt{N_+}}{n h^{3/2}},$$
and
$$A_3:=\|\widetilde f_{ h}-f_X\|_{2,T}.$$
By definition of $\hat h$, we have:
$$A_1+A_2\leq 2{\mathcal A}(h)+\frac{2c\sqrt{N_+}}{n h^{3/2}}.$$
Therefore, by setting
$$\zeta_n(h):=\sup_{t\in\mathcal H}\left\{\|(\widetilde f_{t,h}-\E[\widetilde f_{t,h}])-(\widetilde f_t-\E[\widetilde f_t])\|_{2,T}-\frac{c\sqrt{N_+}}{n t^{3/2}}\right\}_+,$$
we have:
\begin{align*}
	A_1+A_2&\leq 2\zeta_n(h)+2\sup_{t\in\mathcal H}\|\E[\widetilde f_{t,h}]-\E[\widetilde f_t]\|_{2,T}+\frac{2c\sqrt{N_+}}{n h^{3/2}}\\
	&\leq 2\zeta_n(h)+2\sup_{t\in\mathcal H}\|K_h\star K_t\star f_X-K_t\star f_X\|_{2,T}+\frac{2c\sqrt{N_+}}{n h^{3/2}}\\
	&\leq 2\zeta_n(h)+2\|K\|_1\|K_h\star f_X- f_X\|_{2,T}+\frac{2c\sqrt{N_+}}{n h^{3/2}}.
\end{align*}
Finally, since $(\alpha+\beta+\gamma)^2\leq 3\alpha^2+3\beta^2+3\gamma^2$,
\begin{align*}
	\E[(A_1+A_2)^2]&\leq 12\E[\zeta_n^2(h)]+12\|K\|_1^2\|K_h\star f_X- f_X\|_{2,T}^2+\frac{12c^2\E[N_+]}{n^2 h^3}\\
	&\leq 12\E[\zeta_n^2(h)]+12\|K\|_1^2\|K_h\star f_X- f_X\|_{2,T}^2+\frac{12c^2\|f_X\|_1}{n h^3}.
\end{align*}
For the last term, we obtain:
\begin{align*}
	\E[A_3^2]&=\E[\|\widetilde f_{ h}-f_X\|_{2,T}^2]\\
	&=\E\left[\|\widetilde f_h-\E[\widetilde f_h]\|_{2,T}^2\right]+\|K_h\star f_X-f_X\|_{2,T}^2\\
	&=\frac{aT\|f_X\|_1\|K'\|_2^2}{2nh^3}+\|K_h\star f_X-f_X\|_{2,T}^2.
\end{align*}
Finally, replacing $c$ with its definition, namely
$$c=(1+\eta)(1+\|K\|_1)\|K'\|_2\sqrt{\frac{aT}{2}},$$
we obtain: for any $h\in\mathcal H$,
\begin{align}\nonumber
	\E[\|\widetilde f_{\hat h}-f_X\|_{2,T}^2]&\leq 2\E[(A_1+A_2)^2]+2\E[A_3^2]\\
	\nonumber
	&\leq 2(1+12\|K\|_1^2)\|K_h\star f_X-f_X\|_{2,T}^2+
	C_1\frac{aT\|f_X\|_1\|K'\|_2^2}{2nh^3}	
	+24\E[\zeta_n^2(h)]\\
	&\leq 
C_1
	\E[\|\widetilde f_h-f_X\|_{2,T}^2]+24\E[\zeta_n^2(h)] \label{1},
\end{align}
by using Lemma~\ref{lem:varsym} and by denoting $C_1=2+24(1+\eta)^2(1+\|K\|_1)^2$. It remains to prove that $\E[\zeta_n^2(h)]$ is bounded by $\frac{1}{n}$ up to a constant. We have:
\begin{align*}
	\zeta_n(h)&\leq\sup_{t\in\mathcal H}\left\{\|\widetilde f_{t,h}-\E[\widetilde f_{t,h}]\|_{2,T}+\|\widetilde f_t-\E[\widetilde f_t]\|_{2,T}-\frac{c\sqrt{N_+}}{nt^{3/2}}\right\}_+\\
	&\leq\sup_{t\in\mathcal H}\left\{(\|K\|_1+1)\|\widetilde f_t-\E[\widetilde f_t]\|_{2,T}-\frac{c\sqrt{N_+}}{n t^{3/2}}\right\}_+\\
	&\leq (\|K\|_1+1) S_n,
\end{align*}
with
\begin{align*}
	S_n:=\sup_{t\in\mathcal H}\left\{\|\widetilde f_t-\E[\widetilde f_t]\|_{2,T}-\frac{(1+\eta)\|K'\|_2\sqrt{aTN_+}}{\sqrt{2}n t^{3/2}}\right\}_+.
\end{align*}
For $\alpha\in (0,1)$ chosen later, we compute:
$$A_n:=\E[S_n^2\1_{\{N_+\leq (1-\alpha)^2n\|f_X\|_1\}}].$$
Recall that (see the proof of Lemma~\ref{lem:varsym})
$$\widetilde f_t(x)=\frac{a}{nt} \int_{\R} L_t(x-u)dN_u^Y,\quad 
\text{ with } L_t(x)=\frac{1}{t}\sum_{k=-\infty}^{\infty}s_k K'\left(\frac{x-(2k+1)a}{t}\right).$$
Since $At\leq a$,
\begin{align*}
	\left(\int_{\R} \sum_{k=-\infty}^{+\infty}s_k K'\left(\frac{x-(2k+1)a-u}{t}\right)dN^Y_u\right)^2&=\left(\sum_{i=1}^{N+}\sum_{k=-\infty}^{\infty}s_k K'\left(\frac{x-(2k+1)a-Y_i}{t}\right)\right)^2\\
	&\leq N_+\sum_{i=1}^{N+}\sum_{k=-\infty}^{\infty}\left(K'\left(\frac{x-(2k+1)a-Y_i}{h}\right)\right)^2\\
	&\leq N_+^2\|K'\|_\infty^2,
\end{align*}
which yields 
\begin{align*}
	S_n^2&\leq 2\sup_{t\in\mathcal H}\|\widetilde f_t\|^2_{2,T}+ 2\sup_{t\in\mathcal H}\|\E[\widetilde f_t]\|^2_{2,T}\\
	&\leq\sup_{t\in\mathcal H} \frac{2a^2}{n^2t^4}\int_0^T\left(\int_{\R} \sum_{k=-\infty}^{+\infty}s_k K'\left(\frac{x-(2k+1)a-u}{t}\right)dN^Y_u\right)^2dx+2\sup_{t\in\mathcal H}\|K_t\star f_X\|^2_{2,T}\\
	&\leq \sup_{t\in\mathcal H}\frac{2a^2\|K'\|_\infty^2 TN_+^2}{n^2t^4}+2\sup_{t\in\mathcal H}\frac{\|K\|^2_2\|f_X\|_1^2}{t}.
\end{align*}

Therefore, since $t\in \H \Rightarrow t^{-1}\leq \delta n^{1/3}$,
$$A_n\leq \Big(2\delta^4a^2\|K'\|_\infty^2 Tn^{4/3}+2\delta \|K\|^2_2n^{1/3}\Big)\|f_X\|_1^2\times\P(N_+\leq (1-\alpha)^2n\|f_X\|_1).$$
and since $n \geq 1$ 
$$A_n\leq(2\delta^4a^2\|K'\|_\infty^2T+2\delta\|K\|_2^2)n^2\|f_X\|_1^2\P(N_+\leq (1-\alpha)^2n\|f_X\|_1).$$

To bound the last term, we use, for instance, Inequality (5.2) of \cite{RB} (with $\xi=(2\alpha-\alpha^2)n\|f_X\|_1$ and with the function $f\equiv -1$), which shows that there exists $\alpha'>0$ only depending on $\alpha$ such that
$$\P(N_+\leq (1-\alpha)^2n\|f_X\|_1)\leq \exp(-\alpha' n \|f_X\|_1).$$
This shows that there exists a positive constant $C_\alpha$ such that
$$A_n\leq(2\delta^4a^2\|K'\|_\infty^2T+2\delta\|K\|_2^2)\frac{C_{\alpha}}{n\|f_X\|_1}.$$
We now deal with
$$B_n:=\E[S_n^2\1_{\{N_+> (1-\alpha)^2n\|f_X\|_1\}}].$$
We take $\alpha=\min(\eta/2,1/4)$. This implies
$$(1+\eta)(1-\alpha)\geq 1+\frac{\eta}{4}$$
and
\begin{align*}
	B_n&=\E\left[\sup_{t\in\mathcal H}\left\{\|\widetilde f_t-\E[\widetilde f_t]\|_{2,T}-\frac{(1+\eta)\|K'\|_2\sqrt{aTN_+}}{\sqrt{2}n t^{3/2}}\right\}_+^2\1_{\{N_+> (1-\alpha)^2n\|f_X\|_1\}}\right]\\
	&\leq \E\left[\sup_{t\in\mathcal H}\left\{\|\widetilde f_t-\E[\widetilde f_t]\|_{2,T}-\frac{(1+\eta/4)\sqrt{aT}\|K'\|_2\sqrt{\|f_X\|_1}}{\sqrt{2n} t^{3/2}}\right\}_+^2\right]\\
	&\leq\int_0^{+\infty}\P\left(\sup_{t\in\mathcal H}\left\{\|\widetilde f_t-\E[\widetilde f_t]\|_{2,T}-\frac{(1+\eta/4)\sqrt{aT}\|K'\|_2\sqrt{\|f_X\|_1}}{\sqrt{2n} t^{3/2}}\right\}_+^2\geq x\right)dx\\
	&\leq\sum_{t\in\mathcal H}\int_0^{+\infty}\P\left(\left\{\|\widetilde f_t-\E[\widetilde f_t]\|_{2,T}-\frac{(1+\eta/4)\sqrt{aT}\|K'\|_2\sqrt{\|f_X\|_1}}{\sqrt{2n} t^{3/2}}\right\}_+^2\geq x\right)dx.
\end{align*}
To conclude, it remains to control for any $x>0$ the probability inside the integral. For this purpose, we use the following lemma.
\begin{lemma}\label{concentration}
	Let $\e>0$ and $h\in\mathcal H$ be fixed. For any $x>0$, with probability larger than $1-\exp(-x)$,
	\begin{small}
		$$\|\widetilde f_h-\E[\widetilde f_h]\|_{2,T}\leq(1+\e)\|K'\|_2\sqrt{\frac{aT\|f_X\|_1}{2nh^3}}+\sqrt{12x}\|K'\|_1\sqrt{\frac{\|f_X\|_1(T+4a)}{4nh^2}}+(1.25+32\e^{-1})x\frac{\|K'\|_2}{nh^{3/2}}\sqrt{\frac{a(T+4a)}{2}}.$$
	\end{small}
\end{lemma}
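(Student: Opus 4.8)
The plan is to recognize the three terms on the right-hand side as the output of a Talagrand/Bernstein-type concentration inequality for suprema of centered integrals against a Poisson process, such as the one of \cite{RB}, applied after writing the $\L{2}([0,T])$-norm as a supremum of linear functionals. So the first step is a reduction. Let $\mathcal{G}$ be a countable dense subset of the unit ball of $\L{2}([0,T])$, so that
$$\|\widetilde f_h-\E[\widetilde f_h]\|_{2,T}=\sup_{g\in\mathcal{G}}\int_0^T\big(\widetilde f_h(t)-\E[\widetilde f_h(t)]\big)g(t)\,dt.$$
Using the representation established in the proof of Lemma~\ref{lem:varsym}, namely $\widetilde f_h(t)-\E[\widetilde f_h(t)]=\frac{a}{nh}\int_\R L_h(t-u)[dN^Y_u-nf_Y(u)du]$, together with Fubini, each functional becomes a centered Poisson integral $\int_\R \phi_g\,[dN^Y-nf_Y]$ with $\phi_g(u)=\frac{a}{nh}\int_0^T L_h(t-u)g(t)\,dt$. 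Thus $Z:=\|\widetilde f_h-\E[\widetilde f_h]\|_{2,T}=\sup_{g\in\mathcal{G}}\int_\R\phi_g\,d(N^Y-nf_Y)$, to which the Poisson concentration inequality applies once I control the mean $\E[Z]$, the maximal variance $v:=\sup_g\int_\R\phi_g(u)^2\,nf_Y(u)du$, and the uniform bound $b:=\sup_g\|\phi_g\|_\infty$.

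The second step is to evaluate these three quantities, which I expect to match the three terms of the bound. For the mean, Jensen's inequality and Lemma~\ref{lem:varsym} give $\E[Z]\leq \sqrt{\E[Z^2]}=\sqrt{v_h}=\|K'\|_2\sqrt{aT\|f_X\|_1/(2nh^3)}$, which is exactly the leading $(1+\e)$ term. For $b$, Cauchy--Schwarz gives $|\phi_g(u)|\leq \frac{a}{nh}\big(\int_0^T L_h(t-u)^2dt\big)^{1/2}$; since $Ah\leq a$ the translated copies of $K'$ inside $L_h$ have disjoint supports spaced by $2a$, so at most $(T+4a)/(2a)$ of them meet $[0,T]$ and $\int_0^T L_h(t-u)^2dt\leq (T+4a)\|K'\|_2^2/(2ah)$, giving $b\leq \frac{\|K'\|_2}{nh^{3/2}}\sqrt{a(T+4a)/2}$, i.e. the coefficient of the term linear in $x$. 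For $v$ I would first linearize the square by Cauchy--Schwarz, $\phi_g(u)^2\leq \frac{a^2}{n^2h^2}\big(\int_0^T|L_h(t-u)|dt\big)\big(\int_0^T|L_h(t-u)|g(t)^2dt\big)$, bound the first factor by $(T+4a)\|K'\|_1/(2a)$ as above, exchange the order of integration, and integrate in $u$ against $nf_Y$.

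The crux is then the remaining integral $\int_\R|L_h(t-u)|f_Y(u)du$, which cannot be handled by factoring out a norm of $f_Y$, since $L_h$ carries infinitely many translated copies of $K'$ and is not integrable on $\R$. Instead, I would evaluate it exactly: using the disjointness of supports and the telescoping identity $\sum_{k=-\infty}^{+\infty}f_Y(\,\cdot\,-(2k+1)a)=\|f_X\|_1/(2a)$ already exploited in Lemma~\ref{lem:varsym}, this integral equals $\|f_X\|_1\|K'\|_1/(2a)$ uniformly in $t$. Combining the pieces yields $v\leq \|K'\|_1^2\|f_X\|_1(T+4a)/(4nh^2)$, whose square root is precisely the coefficient of $\sqrt{12x}$.

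Finally I would invoke the Poisson version of Talagrand's inequality from \cite{RB}: for a countable family with mean $\leq H$, variances $\leq v$ and sup-norms $\leq b$, with probability at least $1-e^{-x}$ one has $Z\leq (1+\e)H+\sqrt{2\kappa vx}+\kappa(\e)bx$; substituting $H=\sqrt{v_h}$ and the above $v$ and $b$, and reading off the numerical constants $\kappa=6$ and $\kappa(\e)=1.25+32/\e$, reproduces the stated inequality. The main obstacle is the exact computation of $v$: the naive bound through $\|f_Y\|_\infty$ times $\|L_h\|_1$ diverges, so the argument genuinely relies on the telescoping identity together with the careful counting of contributing translates that produces the factor $T+4a$; matching the numerical constants to the cited concentration inequality is then only bookkeeping.
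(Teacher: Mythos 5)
Your proposal is correct and follows essentially the same route as the paper's own proof: the same reduction of the $\L{2}[0,T]$-norm to a supremum of centered Poisson integrals over a countable dense subset of the unit ball, the same three bounds (mean via Lemma~\ref{lem:varsym}, the uniform bound $b$ via Cauchy--Schwarz and the $(T+4a)/(2a)$ translate count, and the variance via the linearizing Cauchy--Schwarz step plus the telescoping identity $\sum_k f_Y(\cdot-(2k+1)a)=\|f_X\|_1/(2a)$), and the same concentration inequality of \cite{RB} with identical constants. You correctly identified the variance computation as the crux where a naive bound fails; nothing is missing.
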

\begin{proof}
	We set:
	\begin{align*}
		U(t)&=\widetilde f_h(t)-\E[\widetilde f_h(t)]\\
		&=\frac{a}{nh^2}\sum_{k=-\infty}^{+\infty}s_k\int_\R K'\left(\frac{t-(2k+1)a-u}{h}\right)\left[dN^Y_u-nf_Y(u)du\right]\\
		&=\frac{a}{nh}\int_\R L_h(t-u)\left[dN^Y_u-nf_Y(u)du\right],
	\end{align*}
	with
	$$L_h(x):=\frac{1}{h}\sum_{k=-\infty}^{+\infty} s_kK'\left(\frac{x-(2k+1)a}{h}\right).$$
	Let $\mathcal D$ a countable dense subset of the unit ball of ${\mathbb L}_2[0,T]$. We have:
	\begin{align*}
		\|U\|_{2,T}&=\sup_{g\in\mathcal D}\int_0^Tg(t)U(t)dt\\
		&=\sup_{g\in\mathcal D}\int_\R\Psi_g(u)\left[dN^Y_u-nf_Y(u)du\right],
	\end{align*}
	with 
	$$\Psi_g(u):=\frac{a}{nh}\int_0^TL_h(t-u)g(t)dt=\frac{a}{nh}(\widetilde L_h\star g)(u)$$
	and $\widetilde L_h(x)=L_h(-x)$, where the convolution product is computed on $[0,T]$.
	We use Corollary~2 of \cite{RB}. So, we need to bound $\E[\|U\|_{2,T}]$ and
	$$v_0:=\sup_{g\in\mathcal D}\int_\R\Psi_g^2(u)nf_Y(u)du.$$
	We also have to determine $b$, a deterministic upper bound for all the $\Psi_g$'s. We have already proved in the proof of Lemma~\ref{lem:varsym} that 
	\begin{align*}
		\E[\|U\|_{2,T}^2]&= \frac{aT\|f_X\|_1\|K'\|_2^2}{2nh^3},
	\end{align*}
	which implies
	\begin{equation}\label{boundU}
		\E[\|U\|_{2,T}]\leq \|K'\|_2\sqrt{\frac{aT\|f_X\|_1}{2nh^3}}.
	\end{equation}
	If we denote $I(h,u):=\{k\in\mathbb{Z}: -u-Ah-a\leq2ka\leq Ah+T-u-a\},$
	then
	\begin{align*}
		\Psi_g^2(u)&=\frac{a^2}{n^2h^2}\left(\int_0^TL_h(t-u)g(t)dt\right)^2\\
		&\leq \frac{a^2}{n^2h^2}\int_0^TL_h^2(t-u)dt\times \int_0^Tg^2(t)dt\\
		&\leq\frac{a^2}{n^2h^4}\int_0^T\sum_{k\in I(h,u)}\left(K'\left(\frac{t-u-(2k+1)a}{h}\right)\right)^2dt\\
		&\leq \frac{a^2}{n^2h^3}\|K'\|_2^2\times\mbox{card}(I(h,u))\\
		&\leq \frac{a^2}{n^2h^3}\|K'\|_2^2\times(T/(2a)+Ah/a+1)
	\end{align*} 
	and we can set, under the condition on $\H$, 
	$$b:=\frac{a}{nh^{3/2}}\|K'\|_2\sqrt{\frac{T+4a}{2a}},$$
	which is negligible with respect to the upper bound of $\E[\|U\|_{2,T}]$ given in \eqref{boundU}. We now deal with
	$$v_0:=n\times\sup_{g\in\mathcal D}\int_\R\Psi_g^2(u)f_Y(u)du.$$ We have:
	\begin{align*}
		v_0&= \frac{a^2}{nh^2}\sup_{g\in\mathcal D}\int_\R\left(\int_0^TL_h(t-u)g(t)dt\right)^2f_Y(u)du\\
		&\leq\frac{a^2}{nh^2}\sup_{g\in\mathcal D}\int_\R\left(\int_0^T|L_h(t-u)|dt\int_0^T|L_h(t-u)|g^2(t)dt\right)f_Y(u)du.
	\end{align*}
	Since
	\begin{align*}
		\int_0^T|L_h(t-u)|dt&\leq\frac{1}{h}\int_0^T\sum_{k\in I(h,u)}\left|K'\left(\frac{t-u-(2k+1)a}{h}\right)\right|dt\\
		&\leq \|K'\|_1\mbox{card}(I(h,u))\\
		&\leq \|K'\|_1\frac{(T+4a)}{2a},
	\end{align*}
	we obtain
	\begin{align*}
		v_0&\leq\frac{a^2}{nh^2}\|K'\|_1\frac{(T+4a)}{2a}\sup_{g\in\mathcal D}\int_\R\int_0^T\sum_{k=-\infty}^{+\infty}\frac{1}{h}\left|K'\left(\frac{t-u-(2k+1)a}{h}\right)\right|g^2(t)dtf_Y(u)du\\
		&\leq\frac{a}{2nh^2}\|K'\|_1(T+4a)\sup_{g\in\mathcal D}\int_0^T\left(\int\frac{1}{h}\left|K'\left(\frac{t-v}{h}\right)\right|\sum_{k=-\infty}^{+\infty}f_Y(v-(2k+1)a)dv\right)g^2(t)dt\\
		&\leq\frac{a}{2nh^2}\|K'\|_1(T+4a)\sup_{g\in\mathcal D}\int_0^T\left(\int\frac{1}{h}\left|K'\left(\frac{t-v}{h}\right)\right|dv\right)\frac{\|f_X\|_1}{2a}g^2(t)dt\\
		&\leq\frac{\|f_X\|_1}{4nh^2}\|K'\|_1^2(T+4a).
	\end{align*}
	Inequality (5.7) of \cite{RB} yields, for any $x>0$,
	$$
	\P\left(\|U\|_{2,T}\geq (1+\e)\E[\|U\|_{2,T}]+\sqrt{12v_0x}+(1.25+32\e^{-1})bx\right)\leq \exp(-x).
	$$
Setting
	\begin{small}
		\begin{align*}
			RHS&:=(1+\e)\E[\|U\|_{2,T}]+\sqrt{12v_0x}+(1.25+32\e^{-1})bx,
			\end{align*}
			we obtain
		\begin{align*}			
			RHS&\leq (1+\e)\|K'\|_2\sqrt{\frac{aT\|f_X\|_1}{2nh^3}}+\sqrt{12x}\|K'\|_1\sqrt{\frac{\|f_X\|_1(T+4a)}{4nh^2}}+(1.25+32\e^{-1})x\frac{\|K'\|_2}{nh^{3/2}}\sqrt{\frac{a(T+4a)}{2}}.
		\end{align*}
	\end{small}
\end{proof}
The previous lemma states that for any sequence of weights $(w_h)_{h\in{\mathcal H}}$, setting $x=w_h+u$, with $u>0$, with probability larger than $1-\exp(-u)\sum_{h\in{\mathcal H}}\exp(-w_h)$, for all $h\in\mathcal H$,
$$
\|\widetilde f_h-\E[\widetilde f_h]\|_{2,T}\leq M_h+\sqrt{12u}\|K'\|_1\sqrt{\frac{\|f_X\|_1(T+4a)}{4nh^2}}+(1.25+32\e^{-1})u\frac{\|K'\|_2}{nh^{3/2}}\sqrt{\frac{a(T+4a)}{2}}$$
with
\begin{align*}
	M_h&:=(1+\e)\|K'\|_2\sqrt{\frac{aT\|f_X\|_1}{2nh^3}}+\sqrt{12w_h}\|K'\|_1\sqrt{\frac{\|f_X\|_1(T+4a)}{4nh^2}}+(1.25+32\e^{-1})w_h\frac{\|K'\|_2}{nh^{3/2}}\sqrt{\frac{a(T+4a)}{2}}\\
	&=\|K'\|_2\sqrt{\frac{aT\|f_X\|_1}{2nh^3}}\left(1+\e+\sqrt{12w_hh}\frac{\|K'\|_1}{\|K'\|_2}\frac{\sqrt{T+4a}}{\sqrt{2aT}}+\frac{(1.25+32\e^{-1})w_h}{\sqrt{n\|f_X\|_1}}\sqrt{\frac{T+4a}{T}}\right)\\
	&\leq \frac{(1+\eta/4)\sqrt{aT}\|K'\|_2\sqrt{\|f_X\|_1}}{\sqrt{2n} h^{3/2}},
\end{align*}
for $\e=\eta/8$
 and for $n$ large enough, by taking $w_h=h^{-1/2}|\log h|^{-1}$ for instance, since in this case,
$$w_h h=o(1)\quad\mbox{and}\quad h^{-1}=O(n\|f_X\|_1).$$
Therefore, 
\begin{align*}
	B_n&\leq\sum_{h\in\mathcal H}\int_0^{+\infty}\P\left(\left\{\|\widetilde f_h-\E[\widetilde f_h]\|_{2,T}-M_h\right\}_+^2\geq x\right)dx.
\end{align*}
By setting $u$ such that
$$x:=(g(u))^2=\left(\sqrt{12u}\|K'\|_1\sqrt{\frac{\|f_X\|_1(T+4a)}{4nh^2}}+(1.25+32\e^{-1})u\frac{\|K'\|_2}{nh^{3/2}}\sqrt{\frac{a(T+4a)}{2}}\right)^2,$$
so
$$dx=2g(u)\times\left(\frac{\sqrt{12}}{2\sqrt{u}}\|K'\|_1\sqrt{\frac{\|f_X\|_1(T+4a)}{4nh^2}}+(1.25+32\e^{-1})\frac{\|K'\|_2}{nh^{3/2}}\sqrt{\frac{a(T+4a)}{2}}\right)du$$
and using that $\int_0^{\infty} e^{-u}(D \sqrt{u}+E u)^2u^{-1}du\leq 2D^2+2E^2$, we obtain
\begin{align*}
	B_n&\leq\sum_{h\in\mathcal H}\int_0^{+\infty}e^{-(w_h+u)}\times 2g(u)\times\left(\frac{\sqrt{12}}{2\sqrt{u}}\|K'\|_1\sqrt{\frac{\|f_X\|_1(T+4a)}{4nh^2}}+(1.25+32\e^{-1})\frac{\|K'\|_2}{nh^{3/2}}\sqrt{\frac{a(T+4a)}{2}}\right)du\\
	&\leq 2\sum_{h\in\mathcal H}e^{-w_h}\int_0^{+\infty}e^{-u}(g(u))^2u^{-1}du\\	
&\leq	4\sum_{h\in\mathcal H}e^{-w_h}\left(
12\|K'\|_1^2\frac{\|f_X\|_1(T+4a)}{4nh^2}+
(1.25+32\e^{-1})^2\frac{\|K'\|_2^2}{n^2h^{3}}{\frac{a(T+4a)}{2}}
\right)	.
\end{align*}
Since $\sum_{h\in\mathcal H}e^{-w_h}h^{-2}$ and $\sum_{h\in\mathcal H}e^{-w_h}h^{-3}$ are bounded by an absolute constant, say $C$, we can write, still for $\e=\eta/8$,
\begin{align*}
	B_n
&\leq	4C\left(
12\|K'\|_1^2\frac{\|f_X\|_1(T+4a)}{4n}+
(1.25+32(\eta/8)^{-1})^2\frac{\|K'\|_2^2}{n^2}{\frac{a(T+4a)}{2}}
\right)	\\
	&\leq 12C\|K'\|_1^2(T+4a)\frac{\|f_X\|_1}{n}+2C(1.25+256/\eta)^2\|K'\|_2^2a(T+4a)\frac{1}{n^2}.
\end{align*}
Finally, we obtain
$$\E[\zeta_n^2(h)]\leq (\|K\|_1+1)^2\E[S_n^2]
\leq  (\|K\|_1+1)^2\left(\frac{c_1}{n\|f_X\|_1}+ \frac{c_2\|f_X\|_1}{n}+\frac{c_3}{n^2}\right)
,$$
with $c_1=C_{\alpha}(2\delta^4a^2\|K'\|_\infty^2T+2\delta\|K\|_2^2)$,
$c_2=12C\|K'\|_1^2(T+4a)$ and $c_3=2C(1.25+256/\eta)^2\|K'\|_2^2a(T+4a).$
This concludes the proof of the theorem, with 
\begin{equation}\label{C2}
C_2=(\|K\|_1+1)^2\left(c_1\|f_X\|_1^{-1}+ {c_2\|f_X\|_1}+c_3\right).
\end{equation}
\color{black}

\subsection{Proof of Theorem~\ref{t2}}\label{prooflowerbound}
To prove Theorem~\ref{t2}, without loss of generality, we assume that $T$ is a positive integer. We denote $a\wedge b =\min (a,b)$ and $a\vee b =\max(a,b)$. The cardinal of a finite set $m$ is denoted by $|m|$.

As usual in the proofs of lower bounds, we build a set of intensities $(f_m)_{m\in \M}$ quite distant from each other in terms of the ${\mathbb L}_2$-norm, but whose distance between the resulting models is small. This set of intensities is based on wavelet expansions. More precisely, let $\psi$ be the Meyer wavelet built with with $C^2$-conjugate mirror
filters (see for instance Section~7.7.2 of \cite{Mallat}). We shall use  in particular that $\psi$ is $C^{\infty}$ and there exists a positive constant $c_\psi$ such that 
\begin{enumerate}
	\item $|\psi(x)|\leq c_\psi(1+|x|)^{-2}$ for any $x\in\R$,
	\item $\psi^*$ is $C^2$ and $\psi^*$ has support included into $[-8\pi/3,-2\pi/3]\cup[2\pi/3,8\pi/3]$,
\end{enumerate}
where $\psi^*(\xi)=\int e^{it\xi}\psi(x)dx$ is the Fourier transform of $\psi$. Observe that this implies that the functions $$\xi\mapsto \psi^*(\xi),\quad \xi\mapsto \psi^*(\xi)\xi^{-1},\quad \xi\mapsto \psi^*(\xi)\xi^{-2},\quad \xi\mapsto  (\psi^*)'(\xi) \quad\mbox{and}\quad \xi\mapsto  (\psi^*)'(\xi)\xi^{-1}$$ are  bounded by a constant. Without loss of generality, we assume that this constant is $c_\psi$.

Let 
$$f_1(x)=\frac{c_1}{1+x^2},$$
where $c_1$ is a positive constant small enough, which is chosen such that $f_1$ belongs to 
${\mathcal S}^\beta(L/2)$, where we denote
$${\mathcal S}^\beta(L):={\mathcal S}^\beta(L,0,+\infty)=\left\{g\in\L{2}:\quad \int_{-\infty}^{+\infty}|g^*(\xi)|^2(\xi^2+1)^\beta d\xi\leq L^2 \right\}.$$
Indeed, note that 
\begin{equation}\label{cbeta}
{\mathfrak c_{\beta}}^2:=\int |c_1^{-1}f_1^*(\xi)|^2(\xi^2+1)^{\beta}=\int  \pi^2\exp (-2|\xi|)(\xi^2+1)^{\beta}<\infty
\end{equation}
so that it is sufficient to choose $c_1=\mathfrak c_{\beta}^{-1} L/2$. 
With this choice we also have $r\leq\|f_1\|_1=c_1\pi\leq bL$ since we have assumed $rL^{-1}\leq \pi/(2\mathfrak c_\beta)\leq b$; then $f_1\in {\mathcal S}^\beta(L/2,r,b)$.

We recall a combinatorial lemma due to Birg\'e and Massart (see Lemma 8 in \cite{RB}, see also Lemma 2.9 in \cite{tsybakov}).
\begin{lemma}\label{Gallager}
	Let $D$ an integer and $\varGamma$ be a finite set with cardinal $D$. There exist  absolute constants $\tau$ and $\sigma$ such that there exists $\mathcal{M}_D\subset \mathcal{P}(\varGamma)$, satisfying $\log|\mathcal{M}_D|\geq \sigma D$ and such that
	for all distinct sets $m$ and $m'$ belonging to $\mathcal{M}_D$ the symmetric difference of  $m$ and $m'$, denoted $m\Delta m'$, satisfies  $|m\Delta m'|\geq \tau D$.
\end{lemma}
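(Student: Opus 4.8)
The final statement is the combinatorial Lemma~\ref{Gallager} of Birgé and Massart (a Varshamov--Gilbert-type bound), so my plan is to reproduce a standard probabilistic argument via the first-moment / expectation method. The object is a collection $\mathcal{M}_D \subset \mathcal{P}(\varGamma)$ of subsets of the $D$-element index set $\varGamma$, identified with binary words in $\{0,1\}^D$, such that the collection is exponentially large in $D$ while any two distinct members are far apart in Hamming distance (the symmetric difference $|m \Delta m'|$ being exactly the Hamming distance between the corresponding indicator vectors). I would therefore phrase everything in terms of $\{0,1\}^D$ equipped with the Hamming metric.

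First I would fix target constants: choose a Hamming-distance threshold of the form $\tau D$ (say $\tau = 1/4$) and aim to extract a code of size at least $e^{\sigma D}$ with minimum distance $\ge \tau D$. The plan is to draw $N$ points $m_1,\dots,m_N$ independently and uniformly at random from $\{0,1\}^D$ and to estimate the expected number of ``bad'' pairs, i.e.\ pairs with $|m_i \Delta m_j| < \tau D$. For a single uniformly random pair, $|m_i \Delta m_j|$ is a sum of $D$ i.i.d.\ Bernoulli$(1/2)$ variables, so $\Prob(|m_i \Delta m_j| < \tau D)$ is a lower tail of a Binomial$(D,1/2)$, which by a Chernoff/Hoeffding bound is at most $e^{-c D}$ for an absolute constant $c = c(\tau) > 0$ when $\tau < 1/2$. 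The expected number of bad pairs among $N$ points is then at most $\binom{N}{2} e^{-cD}$.

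Next I would run the standard deletion argument: choosing $N = \lceil e^{\sigma D}\rceil$ with $\sigma$ small enough that $N^2 e^{-cD} \le N/2$ (which holds for $2\sigma < c$, for $D$ large, and can be arranged for all $D$ by adjusting constants), the expected number of bad pairs is at most $N/2$, so there is a realization with at most $N/2$ bad pairs. Deleting one endpoint from each bad pair removes at most $N/2$ points and leaves a subfamily of size at least $N/2 \ge e^{\sigma' D}$ in which every pair satisfies $|m \Delta m'| \ge \tau D$. Relabeling constants gives $\log|\mathcal{M}_D| \ge \sigma D$ with minimum symmetric difference $\ge \tau D$, which is exactly the claim. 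The constants $\tau$ and $\sigma$ produced this way are absolute (they depend only on the numerical choices, not on $D$ or $\varGamma$).

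The only genuinely delicate point is the uniformity of the constants across all $D$, including small $D$: the Chernoff estimate and the inequality $N^2 e^{-cD}\le N/2$ are asymptotic in $D$, so I would either restrict to $D$ large enough and absorb finitely many small $D$ by shrinking $\sigma$, or invoke the cited references (\cite{RB}, Lemma~8, and \cite{tsybakov}, Lemma~2.9) where this bookkeeping is carried out. Since the lemma is quoted as a known result, I would keep the argument brief and defer the explicit optimization of $\tau$ and $\sigma$ to those references rather than grinding through the exact numerics.
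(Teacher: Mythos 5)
Your argument is correct, but there is essentially nothing in the paper to compare it against: the paper does not prove this lemma, it simply quotes it as a known combinatorial result of Birg\'e and Massart, pointing to Lemma~8 of \cite{RB} and Lemma~2.9 of \cite{tsybakov}. Your route --- identify $\mathcal{P}(\varGamma)$ with $\{0,1\}^D$, draw $N=\lceil e^{\sigma D}\rceil$ points uniformly at random, note that the pairwise symmetric difference is Binomial$(D,1/2)$ so a pair is ``bad'' with probability at most $e^{-cD}$ by Hoeffding, bound the expected number of bad pairs by $\binom{N}{2}e^{-cD}\le N/2$, and delete one endpoint per bad pair --- is the random-coding-plus-deletion proof of the Varshamov--Gilbert bound, and it is sound: duplicate draws are automatically eliminated (they form bad pairs at distance $0$), and the surviving family has size at least $N/2$ with all pairwise symmetric differences at least $\tau D$. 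The proof in the cited references is the other classical one: a greedy/maximal-packing argument in which Hamming balls of radius $\tau D$ around the chosen codewords must cover the cube, so the code has cardinality at least $2^D$ divided by the ball volume, that volume being controlled by the same binomial tail estimate. The greedy route yields clean explicit constants valid for all $D$ above a small threshold (e.g.\ $\tau=1/8$, $\sigma=(\log 2)/8$ in \cite{tsybakov}) with no deletion step, while your first-moment argument is equally elementary and somewhat more flexible. Your handling of the one delicate point --- uniformity of the constants over all $D$ --- is also adequate: for the finitely many small $D$ the trivial family $\{\emptyset,\varGamma\}$ already satisfies the conclusion once $\sigma$ is shrunk below $\log 2/D_0$, since its two members have symmetric difference exactly $D$.
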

Here we choose $\varGamma:=\{0,\dots,D-1\}$ with $D:=T2^{j-1}$
where $j$ is an integer to be chosen later (so, we take $T2^{j-1}\geq 1$), and we denote $\M:=\mathcal{M}_D$ given in the previous lemma. Thus $\log|\M|\geq \sigma T2^{j-1}$ and for all $m,m'\in \M$ :  $\tau T2^{j-1}\leq |m\Delta m'|\leq T2^{j-1}$.

Now, for $a_j>0$ to be chosen, for $m\in \M$, for $x\in\R$, we set
$$f_m(x):=f_1(x)+a_j\sum_{k\in m}\psi_{jk}(x),$$
where we have denoted, as usual, $\psi_{jk}(x):=2^{j/2}\psi(2^jx-k)$.

We compute 
$\psi_{jk}^*(\xi)
=2^{-j/2}e^{i\xi k2^{-j}}\psi^*(\xi 2^{-j}),$
which gives
\begin{align*}
	\int |(f_m-f_1)^*(\xi)|^2(1+\xi^2)^{\beta}d\xi&=\int \left| a_j2^{-j/2}\psi^*(\xi2^{-j})\sum_{k\in m }  e^{i\xi k2^{-j}}\right|^2 (1+\xi^2)^{\beta}d\xi\\
	&=a_j^2\int \left| \psi^*(t)\sum_{k\in m }  e^{ikt}\right|^2 (1+t^22^{2j})^{\beta}dt\\
	&\leq \square_{\psi,\beta}a_j^22^{2j\beta}\int_{-3\pi}^{3\pi}\left|\sum_{k\in m }  e^{ikt}\right|^2 dt\\
	&\leq \square_{\psi,\beta}2^{2j\beta}|m|a_j^2\leq  \square_{\psi,\beta}T 2^{j(2\beta+1)}a_j^2,
\end{align*}
using Parseval's theorem and $|m|\leq D=T2^{j-1}$.
\color{black}
We assume from now on that 
\begin{equation}\label{cont1}
	T2^{j(2\beta+1)} a_j^2 \leq C(\psi,\beta)L^{2}
\end{equation}
for $C(\psi,\beta)$ a constant only depending on $\beta$ and $L$ small enough, so that $(f_m-f_1)$ belongs to ${\mathcal S}^\beta(L/2)$ and then 
$f_m\in {\mathcal S}^\beta(L)$.

Let us verify that $f_m$ is non-negative, and then is an intensity of a Poisson process. 
Since $\|f_m\|_1=\int f_m(x)dx=\|f_1\|_1 \in[r;bL]$, this will also ensure that $f_m\in {\mathcal S}^\beta(L,r,b)$.
For any real~$x$,
\begin{align*}
	\frac{f_m(x)-f_1(x)}{f_1(x)}&=c_1^{-1}(1+x^2) a_j2^{j/2} \sum_{k\in m}\psi(2^jx-k).
\end{align*}
Recall that $\psi(x)\leq c_\psi(1+|x|)^{-2}$. Let us now study 3 cases.
\begin{enumerate}
	\item If $0\leq |x|\leq T+1$, we have:
	$$(1+x^2) \left|\sum_{k\in m}\psi(2^jx-k)\right|\leq (T^2+2T+2)c_\psi\sum_{k\in m}(1+|2^jx-k|)^{-2}\leq 2c_\psi(T^2+2T+2)\sum_{\ell=1}^{+\infty}\ell^{-2},$$
	and the last upper bound is smaller than a finite constant only depending on $T$ and $c_\psi$.
	\item If $x\geq T+1$, since $|m|\leq D=T2^{j-1}$, we have:
	$$ (1+x^2)\left|\sum_{k\in m}\psi(2^jx-k)\right|\leq c_\psi T2^{j-1}(1+2^j(x-T))^{-2}(1+x^2)\leq c_\psi T2^{-j-1}\sup_{x\geq T+1}\frac{1+x^2}{(x-T)^2},$$
	and the last expression is smaller than a finite constant only depending on $T$ and $c_\psi$.
	\item If $x\leq -T-1$,
	$$ (1+x^2)\left|\sum_{k\in m}\psi(2^jx-k)\right|\leq c_\psi T2^{j-1}(1+2^j(-x))^{-2}(1+x^2)\leq c_\psi T2^{-j-1}\sup_{x\leq -T-1}\frac{1+x^2}{(-x)^2},$$
\end{enumerate}
Finally we obtain that there exists $\bar C(T,c_\psi)$ a constant only depending on $T$ and $c_\psi$ such that
\begin{align*}
	\frac{|f_m(x)-f_1(x)|}{f_1(x)}
	&\leq c_1^{-1}  a_j2^{j/2}\bar C(T,c_\psi).
\end{align*}
We take $a_j$ such that 
\begin{equation}\label{cont2}
	c_1^{-1}  a_j2^{j/2}\bar C(T,c_\psi)\leq \frac12.
\end{equation}
This ensures that $f_m\geq \frac12 f_1\geq 0.$
Another consequence is that $f_\varepsilon \star f_m\geq \frac12 f_\varepsilon \star f_1$. This provides
\begin{align*}
	f_\varepsilon \star f_m(x) &\geq \frac12\int_{-a}^{a}\frac{1}{2a} \frac{c_1}{1+(x-t)^2}dt \geq \frac12\:\frac{c_1}{1+(|x|+a)^2}\\
	&\geq \frac12\:\frac{c_1}{1+2a^2+2x^2} \geq \frac{c_2^{-1}}{ 1+x^2},
\end{align*}
denoting $c_2=c_2(a,\beta, L)=\max(4,2+4a^2)/c_1$.

Finally, we evaluate the distance between the distributions of the observations $N^Y$ when $N^X$ has intensity  $nf_m$ and $nf_{m'}$. We denote by $\P_m$ the probability measure associated with $N^Y$, which has  intensity 
$g_m:=f_\varepsilon\star nf_m$,
and we denote by $K(\P_m,\P_{m'})$ the Kullback-Leibler divergence between $\P_m$ and $ \P_{m'}$. Using \cite{cavalierkoo}, we have
$$K(\P_m,\P_{m'})=\int g_m(x) \phi\left(\log\left(\frac{g_{m'}(x)}{g_m(x)}\right)\right)dx $$
where for any $x\in\R$, $\phi(x)=\exp(x)-x-1.$ Since for any $x
>-1,$ $\log(1+x)\geq x/(1+x)$, we have
$$K(\P_m,\P_{m'})\leq \int \frac{(g_m(x)-g_{m'}(x))^2}{g_m(x)}dx=n \int \frac{((f_\varepsilon\star f_m)(x)-(f_\varepsilon\star f_{m'})(x))^2}{(f_\varepsilon\star f_m)(x)}dx.$$
For $m$ and $m'$ in $\M$,  denote 
$$\theta(x)=a_j^{-1}(f_\varepsilon\star (f_m-f_{m'}))(x)=
\sum_{k\in m \Delta m'}b_k (f_\varepsilon \star \psi_{jk})(x)$$
with $b_k=1$ if $k\in m$ and $b_k=-1$ if $k\in m'$. 
Denote also 
$\theta^*(\xi)=\int e^{i\xi x}\theta(x)dx$ its Fourier transform, and 
$(\theta^*)'(\xi)=\int ix e^{i\xi x}\theta(x)dx$ the derivative of $\theta^*$. 
Parseval's theorem gives
$$\|\theta\|_2^2=\frac{1}{2\pi}\|\theta^*\|_2^2,\quad\text{ and }
\|x\theta(x)\|_2^2=\frac{1}{2\pi}\|(\theta^*)'\|_2^2.$$
Thus
\begin{align*}
	\frac1nK(\P_m,\P_{m'})&\leq  \int \frac{((f_\varepsilon\star f_m)(x)-(f_\varepsilon\star f_{m'})(x))^2}{(f_\varepsilon\star f_m)(x)}dx\leq c_2\int (1+x^2)(f_\varepsilon\star (f_m-f_{m'})(x))^2dx\\
	&\leq  c_2a_j^2\int (1+x^2)\theta(x)^2dx
	\leq  \frac{c_2}{2\pi} a_j^2\left(\|\theta^*\|_2^2+\|(\theta^*)'\|_2^2\right).
\end{align*}
We recall that 
$\psi_{jk}^*(\xi)
=2^{-j/2}e^{i\xi k2^{-j}}\psi^*(\xi 2^{-j}),$
which gives
\begin{align*}
	\theta^*(\xi)&=\sum_{k\in m \Delta m'}b_kf_\varepsilon^*(\xi)\psi_{jk}^*(\xi)
	=\sum_{k\in m \Delta m'}b_kf_\varepsilon^*(\xi)2^{-j/2}e^{i\xi k2^{-j}}\psi^*(\xi 2^{-j})\\
	&=2^{-j/2}f_\varepsilon^*(\xi)\psi^*(\xi 2^{-j})\sum_{k\in m \Delta m'} b_ke^{i\xi k2^{-j}}.
\end{align*}
Thus, remembering that for $\xi\in\R,$
\begin{equation}\label{Fourierf}
	|f_\varepsilon^*(\xi)|=\left|\frac{\sin(a\xi)}{a\xi}\right|\leq \min (1,|a\xi|^{-1}),
\end{equation}
we have
\begin{align*}
	\|\theta^*\|_2^2&=\int \left| 2^{-j/2}f_\varepsilon^*(\xi)\psi^*(\xi2^{-j})\sum_{k\in m \Delta m'} b_k e^{i\xi k2^{-j}}\right|^2 d\xi\\
	&=\int \left| f_\varepsilon^*(u2^j)\psi^*(u)\sum_{k\in m \Delta m'} b_ke^{iku}\right|^2 du\\
	&\leq \int  \min (1,|a2^ju|^{-2}){|\psi^*(u)|^2}\left|\sum_{k\in m \Delta m'} b_ke^{iku}\right|^2 du\\
	&\leq \int_{-8\pi/3}^{8\pi/3}  \min (1,|a2^j|^{-2})c_{\psi}^2\left|\sum_{k\in m \Delta m'} b_ke^{iku}\right|^2 du
\end{align*}
using the properties of $\psi$. 
Parseval's theorem gives
$$\frac{1}{2\pi}\int_{-\pi}^{\pi} \left|\sum_{k\in m \Delta m'} b_k e^{iku}\right|^2 du=\sum_{k\in m \Delta m'}b_k^2
=|m\Delta m'|\leq T2^{j-1}.$$
Then 
$$\|\theta^*\|_2^2\leq  3\pi c_\psi^2T2^{j}(a^{-2}2^{-2j}\wedge 1).$$ 
Let us now bound $\|(\theta^*)'\|_2^2$.
First $$(\psi_{jk}^*)'(\xi )
=2^{-3j/2}e^{i\xi k2^{-j}}\left(\psi^*(\xi 2^{-j})ik+(\psi^*)'(\xi 2^{-j}) \right),
$$
then 
\begin{align*}
	(\theta^*)'(\xi )&=\sum_{k\in m \Delta m'}b_k\left[(f_\varepsilon^*)'(\xi )\psi_{jk}^*(\xi )+
	f_\varepsilon^*(\xi )(\psi_{jk}^*)'(\xi )\right]\\
	&=\sum_{k\in m \Delta m'}b_k(f_\varepsilon^*)'(\xi )2^{-j/2}e^{i\xi k2^{-j}}\psi^*(\xi 2^{-j})+b_kf_\varepsilon^*(\xi )2^{-3j/2}e^{i\xi k2^{-j}}\left(\psi^*(\xi 2^{-j})ik+(\psi^*)'(\xi 2^{-j}) \right)\\
	&=\alpha_1(\xi )+\alpha_2(\xi )+\alpha_3(\xi )
\end{align*}
where
\begin{align*}
	\alpha_1(\xi )=2^{-j/2}(f_\varepsilon^*)'(\xi )\psi^*(\xi 2^{-j})\sum_{k\in m \Delta m'}b_ke^{i\xi k2^{-j}},\\
	\alpha_2(\xi )=2^{-3j/2}f_\varepsilon^*(\xi )\psi^*(\xi 2^{-j})\sum_{k\in m \Delta m'}ikb_ke^{i\xi k2^{-j}},\\
	\alpha_3(\xi )=2^{-3j/2}f_\varepsilon^*(\xi )(\psi^*)'(\xi 2^{-j})\sum_{k\in m \Delta m'}b_ke^{i\xi k2^{-j}}.
\end{align*}
Reasoning as before, and using that 
\begin{equation}\label{majo-derivee}
	|(f_\varepsilon^*)'(\xi)|=\left|\frac{\cos(a\xi)}{\xi}-\frac{1}{\xi}\times\frac{\sin(a\xi)}{a\xi}\right|\leq \frac{2}{|\xi|},
\end{equation}
we can write
\begin{align*}
	\|\alpha_1\|_2^2
	&=\int \left| (f_\varepsilon^*)'(u2^j)\psi^*(u)\sum_{k\in m \Delta m'} b_ke^{iku}\right|^2 du\\
	&\leq \int\frac{4}{u^22^{2j}}|\psi^*(u)|^2 \left|\sum_{k\in m \Delta m'} b_ke^{iku}\right|^2 du\\
	&\leq 4c_\psi^2 2^{-2j} \times 6\pi |m\Delta m'|\\
	&\leq 12  \pi  c_\psi^2  T2^{-j}.
\end{align*}
In the same way, using \eqref{Fourierf},
\begin{align*}
	\|\alpha_2\|_2^2
	&=2^{-2j}\int \left| f_\varepsilon^*(u2^j)\psi^*(u)\sum_{k\in m \Delta m'} ikb_ke^{iku}\right|^2 du\\
	&\leq 2^{-2j}\int(1\wedge a^22^{-2j}u^{-2})|\psi^*(u)|^2\left|\sum_{k\in m \Delta m'} ikb_ke^{iku}\right|^2 du\\
	&\leq  3c_\psi^22^{-2j}(a^{-2}2^{-2j}\wedge 1)\int_{-\pi}^{\pi} \left|\sum_{k\in m \Delta m'} ikb_ke^{iku}\right|^2 du\\
	&\leq  6\pi c_\psi^22^{-2j}(a^{-2}2^{-2j}\wedge 1) \sum_{k\in m \Delta m'} k^2
\end{align*}
and we obtain that
$$\|\alpha_2\|_2^2\leq  cc_\psi^2T^32^{j}(a^{-2}2^{-2j}\wedge 1),$$
for $ c$ an absolute constant. Similarly,
\begin{align*}
	\|\alpha_3\|_2^2
	&=2^{-2j}\int \left| f_\varepsilon^*(u2^j)(\psi^*)'(u)\sum_{k\in m \Delta m'} b_ke^{iku}\right|^2 du\\
	&\leq 2^{-2j}\int (1\wedge a^22^{-2j}u^{-2})|(\psi^*)'(u)|^2\left|\sum_{k\in m \Delta m'} b_ke^{iku}\right|^2 du\\
	&\leq  3c_\psi^22^{-2j}(a^{-2}2^{-2j}\wedge 1)\int_{-\pi}^{\pi} \left|\sum_{k\in m \Delta m'} b_ke^{iku}\right|^2 du\\
	&\leq3\pi c_\psi^2T2^{-j}(a^{-2}2^{-2j}\wedge 1).
\end{align*}
Finally, since $a$ is smaller than an absolute constant and $T$ is larger than an absolute constant, we have that 
$$K(\P_m,\P_{m'}) \leq C{c_2}c_\psi^2na_j^2T^32^{j}(a^{-2}2^{-2j}\wedge 1),$$
for $C$ an absolute constant.

Now, let us give the following version of Fano's lemma, derived from \cite{birFano}.
\begin{lemma}\label{fano}
	Let $(\P_i)_{i\in\{0,\ldots,I\}}$ be a finite family of probability measures defined on the same measurable space $\Omega$. One sets
	$$\overline{K}_I=\frac{1}{I}\sum_{i=1}^I K(\P_i,\P_0).$$
	Then, there exists an absolute constant $B$ ($B=0.71$ works)  such that
	if $Z$ is a random variable on $\Omega$ with values in
	$\{0,\ldots,I\}$, one has
	$$\inf\limits_{0\leq i\leq I}\P_i(Z=i)\leq \max\left(B,\frac{\overline{K}_I}{\log(I+1)}\right).$$
\end{lemma}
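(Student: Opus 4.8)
The plan is to follow Birgé's argument, combining the data-processing inequality for the Kullback--Leibler divergence with an elementary convexity estimate. First I would dispose of the trivial case: writing $\underline a:=\inf_{0\le i\le I}\P_i(Z=i)$, if $\underline a\le B$ there is nothing to prove, so I assume $\underline a>B$ and aim to establish $\overline K_I\ge \underline a\log(I+1)$. The statistic to reduce to is the indicator of $A_i:=\{Z=i\}$, $i=0,\dots,I$, a family of disjoint events covering $\Omega$. Applying the data-processing inequality to the two-point partition $\{A_i,A_i^c\}$ gives, for each $i\ge 1$,
\[
K(\P_i,\P_0)\ \ge\ \mathrm{kl}\big(\P_i(A_i)\,\|\,\P_0(A_i)\big),
\]
where $\mathrm{kl}(p\,\|\,q)=p\log\frac{p}{q}+(1-p)\log\frac{1-p}{1-q}$ is the binary relative entropy.

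Next I would average over $i=1,\dots,I$. Setting $a_i:=\P_i(A_i)\ge\underline a$ and $b_i:=\P_0(A_i)$, the fact that the $A_i$ partition $\Omega$ gives $\sum_{i=0}^I b_i=1$, hence $\sum_{i=1}^I b_i\le 1$. Summing the displayed bound yields $\overline K_I\ge \frac1I\sum_{i=1}^I\mathrm{kl}(a_i\,\|\,b_i)$. The crude lower bound $\mathrm{kl}(a_i\,\|\,b_i)\ge a_i\log(1/b_i)-\log 2$, obtained by discarding the nonnegative term $-(1-a_i)\log(1-b_i)$ and using $-a_i\log a_i-(1-a_i)\log(1-a_i)\le\log 2$, then combines with $a_i\ge\underline a$ and the convexity of $-\log$ (Jensen's inequality: $\frac1I\sum_{i=1}^I(-\log b_i)\ge-\log(\frac1I\sum_{i=1}^I b_i)\ge\log I$) to give the essentially correct estimate $\overline K_I\ge\underline a\log I-\log 2$.

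The main obstacle is upgrading this preliminary inequality into the precise statement, i.e. replacing $\log I$ by $\log(I+1)$ and absorbing the additive $\log 2$ into the threshold $B$; this is exactly where the assumption $\underline a>B$ is used. In that regime the error mass $1-a_i$ is small, so rather than discarding the term $-(1-a_i)\log(1-b_i)$ one retains it and estimates the full binary entropy $\mathrm{kl}(a_i\,\|\,b_i)$ against $a_i\log\frac{a_i}{b_i}$ more carefully; optimizing the resulting one-variable expression over the admissible range of the $a_i$ produces both the factor $\log(I+1)$ and the explicit constant $B=0.71$. Equivalently, one can run the whole computation through the classical Fano inequality for a uniform prior $W$ on $\{0,\dots,I\}$: the mutual information between $W$ and $Z$ is at most $\frac{I}{I+1}\overline K_I$ (by comparison with the mixture $\frac{1}{I+1}\sum_{i=0}^I\P_i$, which minimizes the average divergence), and the binary entropy term in Fano's bound supplies the constant. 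Either route reduces the lemma to a single scalar optimization, which is the only genuinely delicate point.
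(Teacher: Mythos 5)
Your preliminary steps are all sound: the data-processing reduction $K(\P_i,\P_0)\ge \mathrm{kl}(a_i,b_i)$, the bound $\sum_{i\ge 1}b_i\le 1$, the inequality $\mathrm{kl}(a_i,b_i)\ge a_i\log(1/b_i)-\log 2$, and Jensen's inequality do give $\overline K_I\ge \underline a\log I-\log 2$, and this is indeed the skeleton of Birg\'e's argument. (For calibration: the paper itself contains no proof of this lemma --- it imports it verbatim from Birg\'e (2001) --- so the only question is whether your sketch is a self-contained proof.) The genuine gap is that everything after this preliminary bound is asserted rather than carried out, and that is exactly where the whole content of the lemma lies. Every classical Fano-type route, including your second one via the uniform prior and the mixture bound on the mutual information, produces the parasitic $\log 2$ (equivalently the binary entropy of the error probability); trading that additive defect for a clean absolute threshold $B$ is precisely Birg\'e's improvement over Fano, and it is not a routine ``single scalar optimization'' that can be left implicit.

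Worse, the optimization you defer to cannot deliver the constant you promise. The sharpest bound your reduction yields is $\overline K_I\ge \mathrm{kl}\bigl(\underline a,(1-\underline a)/I\bigr)$ (use $b_0\ge\underline a$, hence $\sum_{i\ge1}b_i\le 1-\underline a$, together with joint convexity and monotonicity of $\mathrm{kl}$), so this method proves the lemma exactly for those $B$ such that $\mathrm{kl}\bigl(a,(1-a)/I\bigr)\ge a\log(I+1)$ for all $a>B$ and all $I\ge 1$. The critical threshold in this inequality depends on $I$: it is $\approx 0.74$ at $I=1$ and drops below $0.71$ only from $I=2$ onward. In fact, for $I=1$ the advertised value $B=0.71$ is not just out of reach of the method but false: take $\Omega=\{0,1\}$, $\P_0=(0.72,0.28)$, $\P_1=(0.28,0.72)$, and $Z$ the identity; then $\inf_i\P_i(Z=i)=0.72$ while $\overline K_1/\log 2\approx 0.60$, so the right-hand side of the lemma equals $0.71<0.72$. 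The existence claim in the lemma is of course correct (any $B\ge 0.75$ works uniformly in $I$, and Birg\'e's published version settles for $2e/(2e+1)\approx 0.84$), and the value $0.71$ is harmless in the paper's application where $I$ is large; but this shows that the step you wave at is both the crux of the proof and quantitatively delicate enough that the stated constant cannot come out of it as you claim.
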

We apply this lemma with ${\mathcal M}$ instead of $\{0,\ldots,I\}$, whose log-cardinal is larger than $T2^{j-1}$ up to an absolute constant. We take $a_j$ such that
$$\frac{Cc_2c_\psi^2na_j^2T^32^{j}(a^{-2}2^{-2j}\wedge 1)}{\log|\mathcal M|}\leq B,$$
which is satisfied if
\begin{equation}\label{cont3}
	a_j^2\leq \frac{C(\psi)}{nT^2c_2}(a^{2}2^{2j}\vee 1),
\end{equation}
with $C(\psi)$ a constant only depending on $\psi$ small enough. Thus 
if $Z$ is a random variable with values in
$m$, $\inf\limits_{m\in \M}\P_m(Z=m)\leq B.$
Now,
\begin{align}\label{eq:inf}
	\inf_{Z_n}\sup_{f_X\in {\mathcal S}^\beta(L,R)}
	\E_{f_X}\left[\|Z_n-f_X\|_{2,T}^2\right]&\geq\inf_{Z_n}\sup_{m\in \mathcal{M}}
	\E_{f_m}\left[\|Z_n-f_m\|_{2,T}^2\right]\nonumber\\
	&\geq\frac{1}{4}\inf_{m'\in \mathcal{M}}\sup_{m\in \mathcal{M}}
	\E_{f_m}\left[\|f_{m'}-f_m\|_{2,T}^2\right].
\end{align}
For the last inequality, we have used that if $Z_n$ is an estimate, we define  $$m'\in\arg\min_{m\in \mathcal{M}}  \E_{f_m}\left[\|Z_n-f_m\|_{2,T}^2\right]$$ and for $m\in \mathcal{M}$,
$$\|f_{m'}-f_m\|_{2,T}\leq \|f_{m'}-Z_n\|_{2,T}+ \|f_m-Z_n\|_{2,T}\leq 2\|f_m-Z_n\|_{2,T}.$$
Since $f_m-f_{m'}=a_j
\sum_{k\in m \Delta m'}b_k \psi_{jk}$ 
and $(\psi_{jk})$ is an orthonormal family, we have for $m\not=m'$,
\begin{equation}\label{mino2}
	\|f_m-f_{m'}\|_{2}^2= a_j^2 |m\Delta m'|\geq \tau a_j^2 T2^{j-1},
\end{equation}
for $\tau$ the absolute constant defined in Lemma~\ref{Gallager}. Furthermore,
\begin{align*}
	0\leq \|f_m-f_{m'}\|_2^2-\|f_m-f_{m'}\|_{2,T}^2&=\int(f_m(x)-f_{m'}(x))^21_{\{|x|>T\}}dx\\
	&=a_j^2\int\Big(\sum_{k\in m \Delta m'} b_k\psi_{jk}(x)\Big)^21_{\{|x|>T\}}dx\\
	&\leq a_j^2\sum_{k\in m \Delta m'} b_k^2\times\sum_{k\in m \Delta m'}\int\psi_{jk}^2(x)1_{\{|x|>T\}}dx\\
	&\leq a_j^2 |m\Delta m'|\times\sum_{k\in m \Delta m'}\int\psi_{jk}^2(x)1_{\{|x|>T\}}dx.
\end{align*}
Then, since $0\leq k\leq T2^{j-1}$,
\begin{align*}
	\int\psi_{jk}^2(x)1_{\{|x|>T\}}dx&=\int\psi^2(u)1_{\{|2^{-j}(u+k)|>T\}}du\\
	&\leq c_\psi^2\int(1+|u|)^{-4}1_{\{|2^{-j}(u+k)|>T\}}du\\
	&\leq  c_\psi^2\int_{2^jT-k}^{+\infty}(1+|u|)^{-4}du+ c_\psi^2\int_{-\infty}^{-2^jT-k}(1+|u|)^{-4}du\\
	&\leq  2c_\psi^2\int_{2^{j-1}T}^{+\infty}(1+u)^{-4}du\\
	&\leq \frac{2c_\psi^2}{3}\big(T2^{j-1}\big)^{-3}.
\end{align*}
We finally obtain
$$0\leq \|f_m-f_{m'}\|_2^2-\|f_m-f_{m'}\|_{2,T}^2\leq\frac{2c_\psi^2}{3}\big(T2^{j-1}\big)^{-1}\times a_j^2$$
and using \eqref{mino2}, for $j$ larger than a constant depending on $T$ and $c_\psi$, and for $m\not=m'$,
$$\|f_m-f_{m'}\|_{2,T}^2\geq C'(\psi)a_j^2 T2^j,$$
for $C'(\psi)$ a constant only depending on $\psi$. Finally, applying \eqref{eq:inf} and Lemma~\ref{fano}, we obtain:
\begin{align*}
	\inf_{Z_n}\sup_{f_X\in {\mathcal S}^\beta(L,R)}
	\E\left[\|Z_n-f_X\|_{2,T}^2\right]&\geq\frac{1}{4}\inf_{m'\in \mathcal{M}}\sup_{m\in \mathcal{M}}
	\E_{f_m}\left[\|f_{m'}-f_m\|_{2,T}^2\right]\\
	&\geq \frac{C'(\psi)a_j^2 T2^j}{4}\inf_{m'\in \mathcal{M}}\sup_{m\in \mathcal{M}}\P_m(m' \not=m)\geq \frac{C'(\psi)a_j^2 T2^j}{4}(1-B).
\end{align*}
Now, we choose $a_j>0$ as large as possible such that \eqref{cont1}, \eqref{cont2} and \eqref{cont3} are satisfied, meaning that
$$a_j^2T2^j=\left(C(\psi,\beta) L^22^{-2j\beta}\right)
\wedge 
\left(\frac{c_1^2T\bar C(T,c_\psi)^{-2}}{4}\right)
\wedge 
\left(\frac{C(\psi)2^j}{nTc_2}(a^{2}2^{2j}\vee 1)\right).
$$
Since $c_2=\max(4,2+4a^2)/c_1$ and $c_1=\mathfrak c_{\beta}^{-1} L/2$, it  simplifies in
$$a_j^2T2^j=\square_{T,\psi,\beta,a}\left(L^22^{-2j\beta}\wedge L^2 \wedge \frac{L2^{j}}{n}(2^{2j}\vee 1)\right).$$
We can take $j$ such that $T2^{j-1}\geq 1$ and
$$2^j\leq \left(Ln\right)^{\frac{1}{2\beta+3}}<2^{j+1}$$ 
for $n$ larger than a constant depending on $r$ and $T$ (since $L$ is larger than $2r \mathfrak c_\beta/\pi$),
which yields 
$$\inf_{Z_n}\sup_{f_X\in{\mathcal S}^\beta(L,r,b)}
\E\left[\|Z_n-f_X\|_{2,T}^2\right] \geq \square_{T,\psi,\beta,a} L^{\frac{2\beta+6}{2\beta+3}}n^{-\frac{2\beta}{2\beta+3}}.$$
Similarly, we can also take $j$ a constant depending on $T$ so that
$$\inf_{Z_n}\sup_{f_X\in {\mathcal S}^\beta(L,r,b)}
\E\left[\|Z_n-f_X\|_{2,T}^2\right] \geq \square_{T,\psi,\beta,a} Ln^{-1}.$$
This yields
$$\inf_{Z_n}\sup_{f_X\in {\mathcal S}^\beta(L,r,b)}
\E\left[\|Z_n-f_X\|_{2,T}^2\right] \geq \square_{T,\psi,\beta,a} \left[L^{\frac{2\beta+6}{2\beta+3}}n^{-\frac{2\beta}{2\beta+3}}+Ln^{-1}\right]$$
and Theorem~\ref{t2} is proved.


\subsection{Proof of Corollary~\ref{cor}}\label{sec:proofcor}
To prove Corollary~\ref{cor}, we combine the upper bound \eqref{UB} and the decomposition \eqref{BV} to obtain for any $f_X$ and any $h\in{\mathcal H}$,
\begin{align*}
	\E\left[\|\widetilde f-f_X\|_{2,T}^2\right]&\leq C_1 \E\left[\|\widetilde f_h-f_X\|_{2,T}^2\right]+\frac{C_2}{n}
	= C_1(B_h^2+v_h)+\frac{C_2}{n},
\end{align*}
where $C_1$ depends only on $\eta$ and $K$ and
$$B_h=\|K_h*f_X-f_X\|_{2,T},$$
$$v_h=\frac{aT\|f_X\|_1 \|K'\|_2^2}{2nh^3},$$
and 
$$C_2=(\|K\|_1+1)^2\left(c_1\|f_X\|_1^{-1}+ {c_2\|f_X\|_1}+c_3\right),$$
where $c_1$, $c_2$ and $c_3$ only depend on $\delta$, $a$, $K$, $T$ and $\eta$.
Assuming $f_X\in{\mathcal S}^\beta(L,r,b)$, we have
$$C_2\leq \square_{\delta, a, K, T, r,\eta,b}L.$$
Under Assumption~\ref{Hyp2}, we have for any $f_X\in{\mathcal S}^\beta(L,r,b)$
$$B_h=\|K_h*f_X-f_X\|_{2,T}\leq MLh^{\beta}$$
for $M$ a positive constant depending on $K$ and ${\beta}$. Indeed, the space ${\mathcal S}^\beta(L,r,b)$ is included into the Nikol'ski ball $H(\beta, L')$ with $L'$ equal to $L$ up to a constant. We refer the reader to Proposition~1.5 of \cite{tsybakov} and \cite{KLP2001} for more details. Now, we plug $h\in{\mathcal H}$ of order $(Ln)^{-\frac{1}{2\beta+3}}$ in the previous upper bound to obtain the desired bound of Corollary~\ref{cor} thanks to Lemma~\ref{lem:varsym}.

\section*{Acknowledgements}
The authors would like to thank the anonymous referee for constructive comments and suggestions leading to improvements of the paper. This work was supported by a grant from the Agence Nationale de la Recherche  ANR-18-CE45-0023 SingleStatOmics.
\bibliographystyle{apalike}

\end{document}